\providecommand*{\unit}[1]{\,\ifmmode
\mathrm{\,#1}\else\textup{#1}\fi}
\newcommand{\daga}[1]{{#1}^{\dagger}}      
\newcommand{\normt}[1]{\norm{#1}_1}
\newcommand{\Md}[1]{M_{#1}(\mathbb{C})}
\newtheorem{Definition}{Definition}
\newtheorem{Proposition}{Proposition}
\newtheorem{Lemma}{Lemma}
\newtheorem*{theorem*}{Theorem}
\theoremstyle{boldremark}
\newtheorem*{Remark*}{Remark}
\theoremstyle{boldremark}
\newtheorem{Remark}{Remark}
\theoremstyle{plain}
\newtheorem{Example}{Example}
\theoremstyle{boldremark}
\theoremstyle{plain}
\newtheorem{Corollary}{Corollary}
\begin{document}
	\date{}
	
	\title{\textbf{Open Quantum Dynamics: Memory Effects and Superactivation of Backflow of Information}}
	
	\author[1,2]{Fabio Benatti\footnote{benatti@ts.infn.it}}
	
	\author[1,2]{Giovanni Nichele}
	\affil[1]{\small \textit{Department of Physics, University of Trieste, Strada Costiera 11, I-34151 Trieste, Italy}}
	\affil[2]{\small \textit{Istituto Nazionale di Fisica Nucleare (INFN), I-34151 Trieste, Italy}}
	
	\maketitle
	
	\begin{abstract}
		We investigate the divisibility properties of the tensor products $\Lambda^{(1)}_t\otimes\Lambda^{(2)}_t$ of open quantum dynamics $\Lambda^{(1,2)}_t$ with time-dependent generators. These dynamical maps emerge from a compound open system $S_1+S_2$ that interacts with its own environment in such a way that memory effects remain when the environment is traced away.
		This study is motivated  by the following intriguing effect: one can have Backflow of Information (BFI) from the environment
		to $S_1+S_2$ without the same phenomenon occurring for either $S_1$ and $S_2$. We shall refer to this effect as the
		Superactivation of BFI~(SBFI).
	\end{abstract}


	\section{Introduction}
	An open quantum system $S$ is a system interacting with its environment in such a way that its time-evolution can be approximated by a so-called reduced dynamics. The latter is described using completely positive dynamical maps $\Lambda_t$, $t\geq 0$, on the space of states $\mathcal{S}(S)$
	that can be obtained by eliminating the environment and operating suitable approximations in order to effectively take into account its presence.
	The Markovian character, that is, the lack of memory effects, of the reduced dynamics was initially identified with $\Lambda_t$ being generated by time-independent generators $\mathcal{L}$, $\Lambda_t=\exp(t\,\mathcal{L})$, thus giving rise to one-parameter semigroups.
	In the case of bounded $\mathcal{L}$, their general structure was fully characterized by Gorini, Kossakowski, Sudarshan~\cite{GoriniKossSud} and Lindblad~\cite{LindbladTh} (GKSL generators).
	Completely positive semigroups can be rigorously obtained from a microscopic model by means of approximation techniques known as weak coupling limit~\cite{Davies}, singular coupling limit~\cite{gorini1976singular} and low-density limit~\cite{dumcke1985low}.
	In such a scenario, the dominant feature is decoherence  associated with the fact that information can only flow from the open system to its environment with no possibility of being retrieved. Decoherence is a major source of difficulties in many concrete applications such as quantum computation, quantum communication and in general quantum technologies.
	Instead, memory effects have been thought to counteract decoherence by allowing information to flow back from the environment to the system immersed in it and may thus be beneficial in many applications~\cite{WhatGoodFor}, such as quantum information processing~\cite{bylickaNonMarkovianityReservoirMemory2014}, quantum metrology~\cite{HuelgaPleniononMarkovianQuantumMetrology} and teleportation~\cite{teleportationnonmrk}.
	
	In recent years, indeed, much effort has been devoted to extending the very concept of Markovianity
	beyond the semigroup scenario (see~\cite{ChrusReview22} for a recent comprehensive review). The need for such an extension was pointed to in~\cite{BLP}, where non-Markovianity was identified with Backflow of Information (BFI) from the environment to the open system and associated with revival in the time of the distinguishability between two time-evolving~states. 
	
	In~\cite{BenattiChrusFil}, a case was presented where a dynamics $\Lambda_t$ not exposing BFI relatively to a single
	open quantum system  did instead show BFI when accompanied by an independently evolving copy of itself. In other words, it was shown that, even if $\Lambda_t$ does not show BFI, $\Lambda_t\otimes\Lambda_t$ might show it. 
	This phenomenon can be dubbed the Superactivation of Backflow of Information (SBFI).
	
	As observed above, its physical importance lies in that, by doubling a time-evolving quantum system, the decoherence effects can be diminished in the statistically coupled parties with respect to the single ones.
	The microscopic origins of SBFI, similar to the superactivation of capacity, are not yet fully explained. Indeed, they are as hard to retrieve from a microscopic system--environment interaction as BFI itself. What can be certainly ascertained, as will be explained in a forthcoming paper, is that the phenomenon is not a classical one as it is connected with the presence of non-classical correlations in the open system, not necessarily, however, unlike for the superactivation of capacity~\cite{hastingsSuperadditivityCommunicationCapacity2009,SmithQCapacity}, with entanglement.
	Moreover, a collision model derivation connects it to correlations distributed between the system and its environment~\cite{benattinichele}.
	
	In the following, we shall consider various scenarios in which SBFI occurs for local, factorized dynamics $\Lambda_t^{(1)}\otimes\Lambda_t^{(2)}$, expanding on the results of~\cite{BenattiChrusFil}.
	
	The structure of the paper is as follows. In Section \ref{sec:preliminaries}, after a survey of the two  major approaches to quantum non-Markovianity, namely, the one based on the divisibility of the dynamics and the one based on BFI, we shall review the results of~\cite{BenattiChrusFil} and extend them to the tensor products of two different dynamical maps.  Finally, in Section \ref{example:mixing}, the emergence of SBFI in  mixtures of pure dephasing  qubit dynamics will be considered, with the particular aim of investigating the stability of such a memory effect against local perturbations of one of the two dissipative evolutions.

	\section{Non-Markovianity: Divisibility and BFI}
	\label{sec:preliminaries}
	
	In what follows, we shall be concerned with one-parameter families of physically legitimate dynamical maps $\{\Lambda_t\}_{t\ge0}$, that is, with completely positive maps on the state space of a $d$-dimensional system
	generated by time-local master equations of the {form}
	\begin{equation}
		\frac{{\rm d}}{{\rm d}t}\Lambda_t=\mathcal{L}_t\circ\Lambda_t \,,
	\end{equation} where $\mathcal{L}_t$ is the time-local generator, given by $\displaystyle\mathcal{L}_t=\frac{{\rm d}}{{\rm d}t}\big(\Lambda_t\big) \circ\Lambda_t^{-1}$
	whenever $\Lambda_t^{-1}$ exists.
	
	When the generator is time-independent, $\mathcal{L}_t=\mathcal{L}$, the dynamics is a one-parameter semigroup, $\Lambda_t={\rm e}^{t\mathcal{L}}$, obeying the composition law
	$$
	\Lambda_t\circ\Lambda_s=\Lambda_s\circ\Lambda_t=\Lambda_{s+t}\ , \quad\forall s,t\geq 0\ .
	$$
	
	{Moreover}
	~\cite{BreuerPetruccione},
	$\mathcal{L}$ has the Gorini--Kossakowski--Sudarshan--Lindblad (GKSL) form
	\begin{equation}
		\label{GKSLgen}
		\mathcal{L}[\rho]=-i[H\,,\,\rho]+\sum_{i,j=1}^{d^2-1} K_{ij} \left(F_i \,\rho\, F^{\dagger}_j-\frac{1}{2}\{F_j^{\dagger}F_i,\rho\}\right)
	\end{equation}
	where $K=[K_{ij}]$ is a Hermitian positive semidefinite matrix, known as a Kossakowski matrix, and the operators $F_j$, $j\neq 0$, form a Hilbert--Schmidt orthonormal basis of traceless operators, $\textrm{Tr}(F_i^\dagger F_j)=\delta_{ij}$, with the addition  of the identity $F_{0}=\mathds{1}_d/\sqrt{d}$.
	
	If the generator is time-dependent, then the generated dynamics emerges from a time-ordered exponential and becomes a two-parameter  semigroup of maps $\Lambda_{t,s}$, $0\le s\le t$; namely, $\Lambda_{t,s}=\Lambda_{t,s_1}\circ\Lambda_{s_1,s}$, $0\le s_1\le s\le t$, where
	$$
	\Lambda_{t,s}=\mathcal{T}e^{\int_s^t{\rm d}u\,\mathcal{L}_u}={\rm id}+\sum_{k=1}^{+\infty}\int_s^t{\rm d}t_1\int_{s}^{t_1}{\rm d}t_2\cdots\int_s^{t_{k-1}}{\rm d}t_k\,\mathcal{L}_{t_1}\circ\mathcal{L}_{t_2}\circ\cdots\circ\mathcal{L}_{t_k}\ .
	$$
	
	However, the mere time dependence of the generator is no longer considered strong enough to  characterize bona fide
	non-Markovianity, that is, true memory effects.
	
	\subsection{Non-Markovianity: Lack of CP-Divisibility}
	\label{sec:CP-d}
	
	One approach to non-Markovianity is based on the notion of \textit{{divisibility}}
	\cite{RivasHuelgaPlenio,ChrusReview22}.
	
	\begin{Definition}
		\label{def:divisibility}
		The dynamics $\{\Lambda_t\}_{t\ge0}$  is called divisible if for all $t\ge s\ge 0$ there exists an intertwining two-parameter family of maps $\{\Lambda_{t,s}\}_{t\ge s\ge0 }$ such that $\Lambda_t=\Lambda_{t,s}\circ \Lambda_s$.
		If, for all $0\leq s\leq t$, $\Lambda_{t,s}$ is positive, the dynamics is called P-divisible, while, if $\Lambda_{t,s}$ is completely positive,
		the dynamics is called CP-divisible.
	\end{Definition}
	
	In~\cite{RHPmeasure}, Markovianity is identified with CP-divisibility.
	\begin{Definition}[RHP criterion]
		\label{def:CPd}
		$\{\Lambda_t\}_{t\ge 0}$ is Markovian if and only if it is CP-divisible.
	\end{Definition}
	
	The time-local generator of a one-parameter family of trace-preserving maps can always be written as
	in~\eqref{GKSLgen} with Hamiltonian $H(t)$ and Kossakowski matrix $K(t)=[K_{ij}(t)]$ both depending on time, the latter being
	only Hermitian. The GKSL characterization of completely positive semigroups fails for time-dependent generators; indeed, there can be completely positive dynamics with generators characterized by non-positive semidefinite Kossakowski matrices
	(see Example~\ref{ex:Paulimaps} below).
	Nevertheless, the GKSL characterization has the following extension to the time-dependent case which regards not the complete positivity of the generated maps, but rather their CP-divisibility~\cite{RivasHuelgaPlenio}, its indirect proof being given in Remark~\ref{remarkD1} below as a corollary of Proposition~\ref{prop:necPd}.
	
	\begin{Proposition}
		\label{prop:Cp-div}
		$\Lambda_t$ is CP-divisible if and only if $K(t)\ge 0$ $\forall \,t	\ge0$.
	\end{Proposition}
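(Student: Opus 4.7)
The plan is to prove the two implications separately, exploiting the Choi--Jamio{\l}kowski isomorphism and the first-order expansion $\Lambda_{t+h,t}=\mathrm{id}+h\,\mathcal{L}_t+O(h^2)$ that links the time-local generator to the infinitesimal propagator.

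For the sufficiency direction ($K(t)\ge 0\Rightarrow$ CP-divisibility), I would fix $t$ and observe that, since $K(t)\ge 0$, at that instant $\mathcal{L}_t$ has the bona fide GKSL form \eqref{GKSLgen}, so the frozen-time semigroup $e^{\tau\mathcal{L}_t}$ is completely positive for every $\tau\ge 0$. Then I would represent the intertwiner through a Trotter-like product
$$\Lambda_{t,s}=\lim_{N\to\infty}\prod_{k=1}^{N}e^{\Delta t_N\,\mathcal{L}_{t_k}},\qquad \Delta t_N=\frac{t-s}{N},\quad t_k=s+k\Delta t_N,$$
whose convergence to the time-ordered exponential is guaranteed in finite dimension by the assumed continuity of $t\mapsto\mathcal{L}_t$. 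Since each factor is completely positive and composition preserves this property, $\Lambda_{t,s}$ is CP for every $0\le s\le t$.

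For the necessity direction (CP-divisibility $\Rightarrow K(t)\ge 0$), I would introduce the unnormalised maximally entangled vector $|\Omega\rangle=\sum_{i=1}^{d}|i\rangle\otimes|i\rangle$ and the Choi matrices $C_{t,s}=(\Lambda_{t,s}\otimes\mathrm{id})[|\Omega\rangle\langle\Omega|]$. CP-divisibility means $C_{t+h,t}\ge 0$ for every $h\ge 0$, while $C_{t,t}=|\Omega\rangle\langle\Omega|$. Decomposing $\mathbb{C}^{d}\otimes\mathbb{C}^{d}=\mathbb{C}|\Omega\rangle\oplus|\Omega\rangle^{\perp}$ and testing $C_{t+h,t}\ge 0$ against vectors in $|\Omega\rangle^{\perp}$, one finds that the restriction of $\partial_h C_{t+h,t}\big|_{h=0}=(\mathcal{L}_t\otimes\mathrm{id})[|\Omega\rangle\langle\Omega|]$ to $|\Omega\rangle^{\perp}$ must be positive semidefinite. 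Expanding $\mathcal{L}_t$ in the basis $\{F_j\}_{j=1}^{d^2-1}$ of traceless operators supplemented by $F_0=\mathds{1}_d/\sqrt{d}$ and using that the Hamiltonian and anti-commutator terms produce matrix elements proportional to $|\Omega\rangle$, one recognises that the restricted operator is precisely the Kossakowski matrix $K(t)$ read in the $\{F_j\}_{j\ge 1}$ basis; hence $K(t)\ge 0$.

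The main obstacle is the necessity step, more specifically the careful isolation of the $|\Omega\rangle^{\perp}$ sector in the derivative of the Choi matrix: the Hamiltonian piece $-i[H(t),\cdot]$ and the anti-commutator piece contribute skew-Hermitian/indefinite terms whose projection onto $|\Omega\rangle^{\perp}$ must be shown to cancel, leaving only the positive dissipative contribution. This is straightforward in finite dimension because $t\mapsto C_{t+h,t}$ is differentiable in operator norm and the splitting $|\Omega\rangle\oplus|\Omega\rangle^{\perp}$ is clean, but it is the conceptual heart of the argument. Once this identification is complete, Proposition~\ref{prop:Cp-div} follows, and the indirect route via Proposition~\ref{prop:necPd} promised in Remark~\ref{remarkD1} provides an independent confirmation.
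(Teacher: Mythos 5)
Your proof is correct, but it takes a genuinely different route from the paper's. The paper does not prove Proposition~\ref{prop:Cp-div} directly: it presents it as a corollary of Proposition~\ref{prop:necPd} (see Remark~\ref{remarkD1}), obtained by setting $K^{(2)}(t)=0$ so that the tensor product becomes $\Lambda_t\otimes\mathrm{id}_d$, whose P-divisibility is precisely the CP-divisibility of $\Lambda_t$; the substance there is Lemma~\ref{chp:prop_kossak} (Kossakowski's condition $\bra{\phi}\mathcal{L}_t[\ketbra{\psi}]\ket{\phi}\ge0$ for orthogonal $\ket{\phi},\ket{\psi}$) applied to the doubled generator, together with the observation that $\Phi\Psi^\dagger$ sweeps all traceless matrices. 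Your necessity argument --- differentiating the Choi matrix $C_{t+h,t}$ at $h=0$ and compressing onto $\ket{\Omega}^\perp$ --- is in effect a self-contained derivation of exactly the relevant instance of that lemma: the single choice $\ket{\psi}=\ket{\Omega}/\sqrt{d}$, $\ket{\phi}\perp\ket{\Omega}$ suffices because the vectors $(F_i\otimes\mathds{1})\ket{\Omega}$, $i\ge1$, form an orthonormal basis of $\ket{\Omega}^\perp$ in which the compressed operator is literally the matrix $K(t)$. The cancellation you flag as the main obstacle is in fact immediate: the commutator and anticommutator contributions each carry $\ketbra{\Omega}$ as a left or right factor, so all their matrix elements between vectors orthogonal to $\ket{\Omega}$ vanish identically, with no delicate bookkeeping needed (this is the same observation as Remark~\ref{rmk:onlynoisematters}). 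Your sufficiency argument (frozen-time GKSL semigroups are completely positive and the time-ordered exponential is a norm limit of their compositions) is the standard direct one and does not appear in the paper, which instead obtains sufficiency from the ``if'' direction of Lemma~\ref{chp:prop_kossak}. What your approach buys is self-containedness --- it does not rest on the unproven Lemma~\ref{chp:prop_kossak} --- at the price of invoking the product-integral (Trotter) representation and continuity of $t\mapsto\mathcal{L}_t$; what the paper's route buys is uniformity, since the same computation simultaneously yields the more general necessary condition~\eqref{conditionNEC} for tensor products.
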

	
	Unlike for completely positive maps that are identified using their Kraus--Stinespring structure
	$\Lambda[\rho]=\sum_\alpha L_\alpha\,\rho\,L_\alpha^\dag$, the lack of a general form of only positive ones
	hampers in general the characterization of P-divisible maps that are not CP-divisible.
	The most general assertion concerning them is contained in the following lemma~\cite{ChrusReview22,Kossakowski}.
	\begin{Lemma}
		\label{chp:prop_kossak}
		The intertwining maps $\Lambda_{t,s}=\mathcal{T}e^{\int_{s}^t \dd{u} \mathcal{L}_u}$, $t\geq s\geq 0$, are positive if and only if
		\begin{equation}
			\mathcal{G}_t(\phi,\psi):=\bra{\phi}\mathcal{L}_t[\ketbra{\psi}]\ket{\phi} \ge 0\,,
			\label{chp:koss}
		\end{equation}
		for all $t\geq 0$ and $\ket{\phi}$, $\ket{\psi}$ such that $\braket{\phi}{\psi}=0$.
	\end{Lemma}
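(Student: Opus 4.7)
My plan is to prove necessity and sufficiency separately. For necessity, I would fix $t\ge 0$ and orthogonal unit vectors $\ket{\phi},\ket{\psi}$, expand $\Lambda_{t+\epsilon,t}=\mathrm{id}+\epsilon\,\mathcal{L}_t+O(\epsilon^2)$ and use $\braket{\phi}{\psi}=0$ to write
\[
0\le\bra{\phi}\Lambda_{t+\epsilon,t}[\ketbra{\psi}]\ket{\phi}=\epsilon\,\mathcal{G}_t(\phi,\psi)+O(\epsilon^2),
\]
and conclude $\mathcal{G}_t(\phi,\psi)\ge 0$ upon dividing by $\epsilon>0$ and sending $\epsilon\to 0^+$.

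For sufficiency it is enough to show $\sigma(t):=\Lambda_{t,s}[\sigma_0]\ge 0$ for every strictly positive initial datum $\sigma_0$, rank-deficient inputs being covered by continuity. The strategy is a ``first escape time'' argument, but since the hypothesis $\mathcal{G}_t(\phi,\psi)\ge 0$ is not strict, I would first regularise by passing to the perturbed generator
\[
\mathcal{L}^\delta_t[\rho]:=\mathcal{L}_t[\rho]+\delta\left(\frac{\mathds{1}_d}{d}-\rho\right),\qquad\delta>0,
\]
which is trace preserving, whose Kossakowski form $\mathcal{G}^\delta_t(\phi,\psi)=\mathcal{G}_t(\phi,\psi)+\delta/d>0$ is strictly positive on orthogonal pairs, and whose propagator $\Lambda^\delta_{t,s}$ converges pointwise to $\Lambda_{t,s}$ as $\delta\to 0^+$.

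Setting $\sigma^\delta(t):=\Lambda^\delta_{t,s}[\sigma_0]$ and writing $\lambda_{\min}(t)$ for its smallest eigenvalue, I would suppose for contradiction that $t^*:=\inf\{t>s:\lambda_{\min}(t)\le 0\}<\infty$. By continuity $\sigma^\delta(t^*)\ge 0$ with a zero eigenvector $\ket{\phi}$; the spectral decomposition reads $\sigma^\delta(t^*)=\sum_j p_j\ketbra{\psi_j}$ with $p_j\ge 0$ and all $\ket{\psi_j}$ orthogonal to $\ket{\phi}$. Differentiating then gives
\[
\left.\frac{\mathrm{d}}{\mathrm{d}t}\bra{\phi}\sigma^\delta(t)\ket{\phi}\right|_{t=t^*}=\sum_j p_j\,\mathcal{G}^\delta_{t^*}(\phi,\psi_j)>0,
\]
with strict inequality because trace preservation forbids $\sigma^\delta(t^*)=0$. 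This contradicts the fact that $\bra{\phi}\sigma^\delta(t)\ket{\phi}$ must drop from strictly positive values down to $0$ at $t^*$. Hence $\Lambda^\delta_{t,s}$ preserves positivity, and sending $\delta\to 0^+$ transfers positivity to $\Lambda_{t,s}$.

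The main obstacle, and the reason I insert the depolarising perturbation, is that the bare hypothesis yields only $\frac{\mathrm{d}}{\mathrm{d}t}\bra{\phi}\sigma(t)\ket{\phi}|_{t^*}\ge 0$, which does not by itself rule out higher-order terms driving $\bra{\phi}\sigma(t)\ket{\phi}$ negative just past the first-touch time. Upgrading this inequality to a strict one via the $\delta$-regularisation, and then carefully taking the limit, is the technically delicate step that I would expect to warrant the most attention.
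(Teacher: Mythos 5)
The paper does not prove this lemma at all: it is quoted as a known result with citations to Kossakowski's 1972 paper and Chru\'sci\'nski's review, so there is no in-paper proof to compare against. Your argument is the standard one from those references and is essentially sound: the first-order expansion of $\Lambda_{t+\epsilon,t}$ on an orthogonal pair gives necessity, and the regularised first-escape-time argument gives sufficiency, with the depolarising term correctly supplying the strict inequality $\mathcal{G}^{\delta}_{t^*}(\phi,\psi_j)\ge \delta/d>0$ that makes the sign of $\frac{\mathrm{d}}{\mathrm{d}t}\bra{\phi}\sigma^{\delta}(t)\ket{\phi}$ at the first-touch time contradict the fact that this quantity decreases to $0$ from the left. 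Two small points to tidy up: the perturbation should be written as the linear, trace-preserving map $\rho\mapsto\delta\left(\Tr(\rho)\,\mathds{1}_d/d-\rho\right)$ rather than $\delta\left(\mathds{1}_d/d-\rho\right)$, and in the spectral step you should say explicitly that the eigenbasis of $\sigma^{\delta}(t^*)$ can be chosen to contain the kernel vector $\ket{\phi}$, so that every eigenvector carrying a nonzero weight $p_j$ is orthogonal to $\ket{\phi}$ and the hypothesis on $\mathcal{G}^{\delta}_{t^*}$ applies to each term of the sum.
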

	
	\begin{Example}
		\label{ex:Paulimaps}
		The above lemma permits us to control the P-divisibility of qubit dynamics consisting of the so-called
		Pauli maps which are generated by master equations of the form
		\begin{equation}
			\label{Paulimapseq}
			\partial_t\rho_S(t)=\mathcal{L}_t[\rho_S(t)]=\frac{\lambda}{2}\,\sum_{\alpha=1}^3\gamma_\alpha(t)\,\Big(\sigma_\alpha\,\rho_S(t)\,\sigma_\alpha\,-\,\rho_S(t)\Big)\ .
		\end{equation}
		with $\lambda>0$ and $\sigma_\alpha$, $\alpha=1,2,3$ the Pauli matrices. The generator is of the form~\eqref{GKSLgen} with no Hamiltonian contribution, $d=2$, $F_\alpha=\sigma_\alpha/\sqrt{2}$  and a diagonal, time-dependent  Kossakowski matrix given by the so-called rates $\gamma_\alpha(t)$. Because of Proposition~\ref{prop:Cp-div}, then, $\gamma_\alpha(t)\geq 0$, $\alpha=1,2,3$,  is equivalent to the CP-divisibility of the Pauli maps. Moreover, $\gamma_\alpha(t)\ge 0$ is sufficient but not necessary in order to have a legitimate completely positive Pauli dynamics. \par
		The P-divisibility of the latter maps can be fully characterized by using~\eqref{Paulimapseq} in~\eqref{chp:koss}, yielding
		\begin{equation}
			\label{PauligenPos}
			\mathcal{G}_t(\phi,\psi)=\lambda\sum_{\alpha=1}^3\gamma_\alpha(t)\,\left|\bra{\phi}\sigma_\alpha\ket{\psi}\right|^2\geq 0\ ,
		\end{equation}
		for all orthogonal $\ket{\phi},\ket{\psi}\in\mathbb{C}^2$. Choosing them to be the eigenstates of $\sigma_\alpha$, $\alpha=1,2,3$,  one obtains
		\begin{equation}
			\label{pdivcond}
			\gamma_\beta(t)+\gamma_\delta(t)\geq 0\ ,\quad \beta\neq\delta\ ,\ \beta,\delta\neq \alpha\ .
		\end{equation}
		
		{These} 
		three necessary conditions are also sufficient for P-divisibility~\cite{ChrusWud2013,ChrusWud2015}; indeed, the generators $\mathcal{L}_t$ in~\eqref{Paulimapseq} at different times commute so that, from
		$$
		\mathcal{L}_t[\sigma_\alpha]=-\lambda(\gamma_\beta(t)+\gamma_\delta(t))\,\sigma_\alpha\ ,\quad \beta\neq\delta\ ,\ \beta,\delta\neq \alpha
		\ ,
		$$
		one obtains the dynamical maps
		\begin{equation}
			\label{Paulimaps}
			\Lambda_t[\sigma_\alpha]=\exp\left(-\lambda\int_{0}^{t}{\rm d}\tau\,(\gamma_\beta(\tau)+\gamma_\delta(\tau))\right)[\sigma_\alpha]
		\end{equation}
		and thus the intertwiners
		\begin{equation}
			\label{Pauliintertw}
			\Lambda_{t,s}[\sigma_\alpha]=\exp\left(-\lambda\int_{s}^{t}{\rm d}\tau\,(\gamma_\beta(\tau)+\gamma_\delta(\tau))\right)[\sigma_\alpha]\ .
		\end{equation}
		
		{Therefore},
		the conditions~\eqref{pdivcond} enforce the exponential decay of the Pauli matrices and thus the positivity of
		$\Lambda_{t,s}[\rho_S]$ for all qubit density matrices $\rho_S$, namely, the P-divisibility of the dynamical
		maps $\Lambda_t$, $t\geq 0$.
		
		As a simple concrete instance, let us choose $\gamma_1(t)=\gamma_2(t)=1$ and $\gamma_3(t)=\sin(\omega\,t)$.
		Then, given an initial density matrix in the Bloch representation,
		$$
		\rho_S=\frac{1}{2}\left(1\,+\,r_1\,\sigma_1\,+\,r_2\,\sigma_2\,+\,r_3\,\sigma_3\right)\ ,
		$$
		where $(r_1,r_2,r_3)\in\mathbb{R}^3$ with norm not larger than $1$,
		one finds
		\begin{eqnarray*}
			\Lambda_t[\rho_S(t)]&=&\frac{1}{2}\left(1+\mu(t)\,(r_1\,\sigma_1\,+\,r_2\,\sigma_2)\,+\,{\rm e}^{-2\lambda\,t}\,r_3\,\sigma_3\right)\ ,\\
			\mu(t)&=&\exp\left(-\lambda\,t-\lambda\,\frac{1-\cos(\omega\,t)}{\omega}\right)\ .
		\end{eqnarray*}
		
		{Since} $\sin(\omega\,t)$ can be negative, $\Lambda_t$ cannot be CP-divisible; however, $1+\sin(\omega\,t)\geq 0$ makes it P-divisible. To see whether $\Lambda_t$ represents a physically legitimate, that is, completely positive, evolution, one checks the positivity of the associated Choi matrix,
		\begin{eqnarray}
			\label{Choimat1}
			X_t&:=&\Lambda_t\otimes \mathrm{id}_2[P_2^+]=\frac{1}{4}\left(\mathds{1}_4+\mu(t)(\sigma_1\otimes\sigma_1-\sigma_2\otimes\sigma_2)+e^{-2\lambda\,t}\sigma_3\otimes \sigma_3 \right)\\
			&=&\label{Choimat2}
			\frac{1}{4}\left(
			\begin{array}{cccc}
				1+e^{-2 \lambda  t} & 0 & 0 & 2\mu (t) \\
				0 & 1-e^{-2 \lambda  t} & 0 &
				0 \\
				0 & 0 & 1-e^{-2 \lambda  t} &
				0 \\
				{2\mu (t)} & 0 & 0 &  1+e^{-2
					\lambda  t} \\
			\end{array}
			\right)\ ,
		\end{eqnarray}
		where $P_2^+=\ketbra{\psi_+}$ projects onto the entangled state $\ket{\psi_+}=\frac{1}{\sqrt{2}}\big(\ket{00}+\ket{11}\big)$, where $\ket{0}$ and $\ket{1}$ are eigenstates of $\sigma_3$ with eigenvalues $\pm1$. Then, $\Lambda_t$ is CP iff $X_t \ge 0$~\cite{Choi75}. 
		From~\eqref{Choimat2}, the positivity of the Choi matrix corresponds to
		\begin{equation}\label{ex1:CPcondition}
			1+e^{-2\lambda \,t}\,\geq\, 2\mu(t)\Leftrightarrow e^{\lambda \,t}+e^{-\lambda \,t}\geq 2\,e^{-\lambda/\omega}\,\exp\Big(\frac{\lambda}{\omega}\cos(\omega t)\Big)\ .
		\end{equation}
		
		{If} $\omega\ge0$, then the
		right-hand side of the second inequality above is always less than or equal to 2 and CP is guaranteed.
		Instead, for  $\omega<0$, the expansion of
		both sides for $t\to 0$ shows that \eqref{ex1:CPcondition} is violated 
		when $\lambda<|\omega|$. However, at fixed $|\omega|$, for $\lambda$ sufficiently large, complete positivity is restored, as can be seen by studying the behavior of both sides of inequality~\eqref{ex1:CPcondition}.

		\end{Example}

		In~\cite{BenattiChrusFil}, the following result was proved: it regards the case of two parties of the same type $S$, \emph{{both}
		} dynamically evolving independently under the same dynamics $\{\Lambda_t\}_{t\geq0}$.
		\begin{Proposition}
			\label{benattischrusfilippov_prop}
			$\{\Lambda_t\otimes\Lambda_t\}_{t\ge0}$ on $\mathcal{S}(S+S)$ is P-divisible if and only if $\{\Lambda_t\}_{t\ge0}$ on $\mathcal{S}(S)$ is CP-divisible.
		\end{Proposition}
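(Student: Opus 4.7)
My plan is to reduce both sides of the equivalence to algebraic conditions on the time-local Kossakowski matrix $K(t)$, by invoking Lemma~\ref{chp:prop_kossak} and Proposition~\ref{prop:Cp-div}. The easy direction ($\Leftarrow$) is immediate: if $\{\Lambda_t\}$ is CP-divisible then each intertwiner $\Lambda_{t,s}$ is completely positive, hence $\Lambda_{t,s}\otimes\Lambda_{t,s}$ is CP (tensor products of CP maps are CP) and in particular positive, so $\{\Lambda_t\otimes\Lambda_t\}$ is even CP-divisible, a fortiori P-divisible.

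For the nontrivial direction ($\Rightarrow$), I would apply Lemma~\ref{chp:prop_kossak} to the compound generator $\mathcal{L}^{(2)}_t=\mathcal{L}_t\otimes\mathds{1}+\mathds{1}\otimes\mathcal{L}_t$, obtaining
\begin{equation*}
\mathcal{G}^{(2)}_t(\Phi,\Psi):=\langle\Phi|\mathcal{L}^{(2)}_t[|\Psi\rangle\langle\Psi|]|\Phi\rangle\ge 0\quad\text{for all}\ \langle\Phi|\Psi\rangle=0.
\end{equation*}
Inserting the GKSL form of $\mathcal{L}_t$ and exploiting $\langle\Phi|\Psi\rangle=0$ to kill the Hamiltonian commutator and both anti-commutator terms (each of which carries an explicit factor of $\langle\Phi|\Psi\rangle$ or its conjugate), the inequality collapses into a sum of two $K(t)$-sesquilinear forms,
\begin{equation*}
\sum_{\alpha,\beta=1}^{d^2-1}K_{\alpha\beta}(t)\bigl[a_\alpha\overline{a_\beta}+b_\alpha\overline{b_\beta}\bigr]\ge 0,
\end{equation*}
with $a_\alpha=\langle\Phi|(F_\alpha\otimes\mathds{1})|\Psi\rangle$ and $b_\alpha=\langle\Phi|(\mathds{1}\otimes F_\alpha)|\Psi\rangle$. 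By Proposition~\ref{prop:Cp-div}, CP-divisibility of $\{\Lambda_t\}$ is equivalent to $K(t)\ge 0$, so everything hinges on proving that nonnegativity of this bilinear form over all orthogonal pairs forces $K(t)\ge 0$.

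I would argue by contrapositive: if $K(t_0)\not\ge 0$, pick $x\in\mathbb{C}^{d^2-1}$ with $\sum_{\alpha\beta}\bar x_\alpha K_{\alpha\beta}(t_0)x_\beta<0$ and exhibit an orthogonal pair $(\Phi,\Psi)$ that makes $\mathcal{G}^{(2)}_{t_0}(\Phi,\Psi)<0$. Purely product choices $|\Psi\rangle=|\psi\rangle\otimes|\chi\rangle$, $|\Phi\rangle=|\phi\rangle\otimes|\chi\rangle$ with $\phi\perp\psi$ only recover the (strictly weaker) P-divisibility of $\{\Lambda_t\}$, so genuine entanglement of $|\Psi\rangle$ is essential. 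A convenient family is the Schmidt-rank-$2$ state $|\Psi\rangle=\alpha|ii\rangle+\beta|jj\rangle$ paired with an orthogonal $|\Phi\rangle\in\mathrm{span}\{|ij\rangle,|ji\rangle\}$; tuning $\alpha,\beta$ and rotating the $F$-basis via local unitaries aligns the induced $(a,b)$ with the bad direction $x$, in the spirit of the Bell-state construction that resurfaces in the Pauli analysis of Section~\ref{example:mixing}. The main obstacle is exactly this completeness step -- verifying that as $(\Phi,\Psi)$ ranges over all orthogonal pairs (including such entangled families and their local-unitary orbits) the map $(\Phi,\Psi)\mapsto(a,b)$ has image large enough to detect \emph{every} indefinite direction of $K(t_0)$. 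Combining the local-unitary symmetry $(\Phi,\Psi)\mapsto(U\otimes V)(\Phi,\Psi)$ with Schmidt decomposition of $|\Psi\rangle$ should close the argument.
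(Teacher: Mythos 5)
Your reduction to the sesquilinear-form inequality and your easy direction coincide with the paper's setup: Lemma~\ref{chp:prop_kossak} applied to $\mathcal{L}_t\otimes\mathrm{id}+\mathrm{id}\otimes\mathcal{L}_t$ yields $\sum_{i,j}K_{ij}(t)\,(\overline{a_i}\,a_j+\overline{b_i}\,b_j)\ge0$ with $a_i=\bra{\phi}F_i\otimes\mathds{1}\ket{\psi}$, $b_i=\bra{\phi}\mathds{1}\otimes F_i\ket{\psi}$, and you correctly identify that everything hinges on showing this forces $K(t)\ge0$. But you leave precisely that step open, and the family you propose would not close it. Two problems. First, writing $\ket{\phi},\ket{\psi}$ as coefficient matrices $\Phi,\Psi$, one has $a_i=\Tr(F_i^\dagger\Phi\Psi^\dagger)$; with $\ket{\psi}$ of Schmidt rank $2$ the matrix $W=\Phi\Psi^\dagger$ has rank at most $2$, so for $d\ge3$ the vectors $a$ sweep only a proper determinantal subvariety of $\mathbb{C}^{d^2-1}$, and nonnegativity of a Hermitian quadratic form on a spanning but non-generic set of vectors does not imply positive semidefiniteness. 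Second, and more decisively, you must control \emph{both} vectors at once: $b_i=\Tr(F_i^\dagger(\Psi^\dagger\Phi)^T)$ is built from a \emph{different} matrix than $a$, so aligning $a$ with a bad direction $x$ of $K(t_0)$ is useless if $\bra{b}K(t_0)\ket{b}$ compensates; local unitaries merely conjugate both matrices simultaneously and cannot decouple them.

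The paper proves this statement as the $V=\mathds{1}$, $K^{(1)}=K^{(2)}=K$ corollary of Proposition~\ref{prop:necPd}, and the proof there fixes both issues with one algebraic device you are missing: given an arbitrary traceless $W$, choose an invertible $S$ with $SWS^{-1}=W^{T}$ (every matrix is similar to its transpose) and set $\Phi=S^{-1}$, $\Psi^{\dagger}=SW$. Then $\Phi\Psi^{\dagger}=W$ \emph{and} $(\Psi^{\dagger}\Phi)^{T}=(SWS^{-1})^{T}=W$, the required orthogonality holds automatically since $\Tr(\Phi^{\dagger}\Psi)=\overline{\Tr(W)}=0$, and the inequality collapses to $2\bra{w}K(t)\ket{w}\ge0$ with $w$ the coordinate vector of $W$ in the $F$-basis, which ranges over all of $\mathbb{C}^{d^2-1}$. (The resulting $\ket{\phi},\ket{\psi}$ have full Schmidt rank for generic $W$, which is the precise sense in which entanglement is essential here.) Without this identification, or an equivalent completeness argument, your proof has a genuine gap at its decisive step.
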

		This result, which can be obtained as a corollary of Proposition~\ref{prop:necPd} below (see Remark~\ref{remarkD1}), extends to explicitly time-dependent generators $\mathcal{L}_t$ found in the case of semigroups; namely, that
		the tensor products $\Lambda_t\otimes\Lambda_t$, $t\geq 0$, where $\Lambda_t={\rm e}^{t\mathcal{L}}$, on the states of $S+S$ are positive if and only if $\Lambda_t$ is completely positive~\cite{benattifloreaniniromano2002}, a result also obtainable as a corollary of Proposition~\ref{prop:necPd} below, following the argument in Remark~\ref{remarkD1}.
		
		The physical consequences of Proposition~\ref{benattischrusfilippov_prop} are best appreciated within the context where non-Markovianity is identified using the notion of Backflow of Information (BFI).
		
		\subsection{Non-Markovianity: Backflow of Information}
		\label{sec:BFI}
		
		Differently from the RHP criterion, the BLP criterion proposed in~\cite{BLP} relates Markovianity to  the behavior under $\Lambda_t$ of the distinguishability of any two states $\rho$ and $\sigma$ of the open quantum system $S$, measured via
		\begin{equation}
			D(\Lambda_t[\rho],\Lambda_t[\sigma])=\frac{1}{2}\normt{\Lambda_t[\rho-\sigma]},
		\end{equation}
		with $\normt{\,\cdot\,}$ being the trace norm. A revival in the time of the distinguishability of two states has been interpreted in~\cite{BLP} as BFI from the environment at the roots of dissipative
		dynamics into the open quantum system and used to identify a lack of Markovian behavior. 
		
		If the two initial states are chosen with biased weights $\mu, 1-\mu$, $\mu\in[0,1]$, the proper quantifier of distinguishability becomes the trace norm of the \emph{Helstrom matrix} ${\Delta_\mu(\rho,\sigma)=:\mu\rho-(1-\mu)\sigma}$, $\rho, \sigma \in \mathcal{S}(S)$. One can then state the following~\cite{GTDWissmannBreuerAmato}:
		
		\begin{Definition}[BLP criterion]
			\label{GTDbackflow}
			The dynamics  $\{\Lambda_t\}_t$ does not display BFI  if
			\begin{equation}
				\dv{t} \normt{\Lambda_t[\Delta_\mu(\rho,\sigma)]} \le 0, \quad  \forall t\ge0,
			\end{equation}
			for all $\mu\in [0,1]$ and all $\rho, \sigma$ $\in$ $\mathcal{S}(S)$, in which case it is called Markovian.
		\end{Definition}
		
		Moreover, the following result holds for invertible maps~\cite{ChrusManiscalco,GTDWissmannBreuerAmato}, which are always divisible by $\Lambda_{t,s}=\Lambda_t\circ\Lambda_s^{-1}$.
		
		\begin{Proposition}
			\label{recall2}
			P-divisible maps $\{\Lambda_t\}_{t\ge0}$ do not display BFI. Vice versa, invertible maps $\{\Lambda_t\}_{t\ge0}$ that do not display BFI are P-divisible.
		\end{Proposition}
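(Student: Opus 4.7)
The plan is to establish the two implications separately, using as the central tool the standard link between positivity of a trace-preserving map and its contractivity with respect to the trace norm on the self-adjoint part of the operator algebra.

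For the forward direction (P-divisibility $\Rightarrow$ no BFI), I recall that any positive trace-preserving map $\Phi$ contracts the trace norm on Hermitian operators: given the Jordan decomposition $X = X_+ - X_-$ with $X_\pm \ge 0$ of mutually orthogonal supports, trace preservation yields
\begin{equation*}
\|\Phi[X]\|_1 \le \|\Phi[X_+]\|_1 + \|\Phi[X_-]\|_1 = \text{Tr}(X_+) + \text{Tr}(X_-) = \|X\|_1.
\end{equation*}
Applying this with $\Phi = \Lambda_{t,s}$, which is positive by P-divisibility, to the Hermitian operator $\Lambda_s[\Delta_\mu(\rho,\sigma)]$, the intertwining relation $\Lambda_t = \Lambda_{t,s}\circ\Lambda_s$ gives
\begin{equation*}
\|\Lambda_t[\Delta_\mu(\rho,\sigma)]\|_1 \le \|\Lambda_s[\Delta_\mu(\rho,\sigma)]\|_1 \qquad \forall\, t\ge s\ge 0,
\end{equation*}
which integrates the derivative condition $\frac{d}{dt}\|\Lambda_t[\Delta_\mu]\|_1 \le 0$, i.e.\ the absence of BFI.

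For the converse, invertibility allows me to define the intertwiners $\Lambda_{t,s} := \Lambda_t \circ \Lambda_s^{-1}$, which are automatically trace- and Hermiticity-preserving. Given any Hermitian $X$, set $Y := \Lambda_s^{-1}[X]$; using the Jordan decomposition of $Y$ one writes $Y = \|Y\|_1\,\Delta_\mu(\rho,\sigma)$ with $\mu = \text{Tr}(Y_+)/\|Y\|_1 \in [0,1]$, $\rho \propto Y_+$, $\sigma \propto Y_-$ (the boundary cases $\mu \in \{0,1\}$ requiring only a trivial modification). Rescaling the integrated no-BFI inequality then yields
\begin{equation*}
\|\Lambda_{t,s}[X]\|_1 = \|\Lambda_t[Y]\|_1 \le \|\Lambda_s[Y]\|_1 = \|X\|_1,
\end{equation*}
so $\Lambda_{t,s}$ contracts the trace norm on all Hermitian operators. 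Positivity then follows by contradiction: were $\Lambda_{t,s}$ not positive, some state $\omega$ would map to $\Lambda_{t,s}[\omega] = A - B$ with $A, B \ge 0$ of orthogonal supports and $B \neq 0$, forcing $\text{Tr}(A) - \text{Tr}(B) = 1$ by trace preservation yet $\|\Lambda_{t,s}[\omega]\|_1 = \text{Tr}(A) + \text{Tr}(B) > 1 = \|\omega\|_1$, contradicting contractivity.

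I expect the main technical point to be precisely this last Kossakowski-style implication from trace-norm contractivity to positivity, together with two small caveats worth flagging: the boundary cases $\mu \in \{0,1\}$ in the Jordan-decomposition rescaling of $Y$, and the passage between the pointwise derivative form of the BLP criterion and its integrated monotonicity, which relies on $t \mapsto \|\Lambda_t[\Delta_\mu]\|_1$ being absolutely continuous — standard under the regularity of $\Lambda_t$, but not entirely trivial at eigenvalue crossings.
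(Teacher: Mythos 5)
The paper does not prove this proposition---it is recalled from the cited literature (Chru\'sci\'nski--Maniscalco and Wissmann--Breuer--Vacchini) without an argument---so there is no in-paper proof to compare against. Your reconstruction is correct and follows exactly the standard route of those references: positive trace-preserving maps contract the trace norm on Hermitian operators via the Jordan decomposition, and conversely the full Helstrom family (biased weights $\mu$, which is what yields contractivity on Hermitian operators of \emph{arbitrary} trace rather than only traceless ones) combined with Kossakowski's characterization forces positivity of the intertwiners $\Lambda_t\circ\Lambda_s^{-1}$. The two caveats you flag (the boundary cases $\mu\in\{0,1\}$ and the passage from the pointwise derivative condition to integrated monotonicity) are genuine but minor and handled as you indicate.
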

		
		\begin{Remark}\label{rmk:Kolmogdescrease}
			The BLP criterion for Markovianity 
			is based on P-divisibility and is not equivalent to the CP-divisibility criterion. 
			Indeed, maps which are P-divisible but not CP-divisible do not display  Backflow of Information (a typical example being the ``eternally'' non-Markovian evolution first proposed in~\cite{HallCanonical}). On the contrary, CP-divisible maps cannot show Backflow of Information, for they are P-divisible.
			
			In~\cite{CKR}, the concepts of CP-divisibility and Backflow of Information were reconciled by coupling the system $S$ to
			an inert ancilla $A$ of the same dimension and studying the information flow under the dynamics $\Lambda_t\otimes\mathrm{id}_d$ of the compound $S+A$ evolution. It was shown that an invertible dynamics $\{\Lambda_t\}_{t\geq 0}$  is CP-divisible if and only if
			$$
			\dv{t} \normt{\Lambda_t	\otimes\mathrm{id}_d [\Delta_\mu(\rho,\sigma)]} \le 0\ ,
			$$
			for all $\mu\in[0,1]$ and all $\rho,\sigma \in \mathcal{S}(S+A)$.
		\end{Remark}
		
		For invertible maps, Proposition \ref{benattischrusfilippov_prop} implies that the absence of BFI
		for bipartite systems of identically evolving parties enforces the CP-divisibility of $\Lambda_t$.
		\begin{Corollary}\label{corollary:SBFI}
			An invertible dynamics $\{\Lambda_t\}_{t\ge0}$ is CP-divisible if and only if no BFI occurs for $\Lambda_t\otimes \Lambda_t$, namely,
			\begin{equation}
				\dv{t} \normt{\Lambda_t\otimes \Lambda_t[\Delta_\mu(\rho,\sigma)]} \le 0,
			\end{equation}  for all $\mu\in[0,1]$ and all $\rho,\sigma\in \mathcal{S}(S+S)$.
		\end{Corollary}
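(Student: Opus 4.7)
The plan is to obtain the corollary as an immediate synthesis of the two earlier results already recorded in the excerpt: Proposition \ref{benattischrusfilippov_prop}, which links P-divisibility of $\Lambda_t\otimes\Lambda_t$ to CP-divisibility of $\Lambda_t$, and Proposition \ref{recall2}, which equates absence of BFI with P-divisibility for invertible dynamics. No new technical machinery should be required; the task is just to apply Proposition \ref{recall2} to the tensor-product dynamics rather than to $\Lambda_t$ itself, and then to feed the outcome into Proposition \ref{benattischrusfilippov_prop}.

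For the forward direction, suppose $\{\Lambda_t\}_{t\ge0}$ is CP-divisible, so that for every $0\le s\le t$ there exists a CP intertwiner $\Lambda_{t,s}$ with $\Lambda_t=\Lambda_{t,s}\circ\Lambda_s$. Then $\Lambda_{t,s}\otimes\Lambda_{t,s}$ is itself CP (tensor products of CP maps being CP) and intertwines the tensor-product dynamics, so $\Lambda_t\otimes\Lambda_t$ is CP-divisible and hence P-divisible. Proposition \ref{recall2} then guarantees the absence of BFI for $\Lambda_t\otimes\Lambda_t$. For the converse, I would first note that invertibility of $\Lambda_t$ transfers to $\Lambda_t\otimes\Lambda_t$, whose inverse is $\Lambda_t^{-1}\otimes\Lambda_t^{-1}$; hence Proposition \ref{recall2} applies to $\{\Lambda_t\otimes\Lambda_t\}_{t\ge0}$ and the assumed absence of BFI forces P-divisibility of $\Lambda_t\otimes\Lambda_t$. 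Proposition \ref{benattischrusfilippov_prop} then yields CP-divisibility of $\Lambda_t$, which is the desired conclusion.

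There is essentially no obstacle beyond bookkeeping: the only subtlety worth flagging is the transfer of invertibility to the tensor product, which is what allows one to invoke the second (nontrivial) half of Proposition \ref{recall2} in the reverse direction. All the genuine content lives in Proposition \ref{benattischrusfilippov_prop}, whose own proof is deferred in the paper to a consequence of the later Proposition \ref{prop:necPd}; once that input is granted, the corollary is a one-line assembly of the two propositions.
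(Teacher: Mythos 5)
Your proposal is correct and follows exactly the route the paper intends: the corollary is presented there as an immediate consequence of Propositions \ref{benattischrusfilippov_prop} and \ref{recall2}, with the forward direction via CP-divisibility of the tensor product and the converse via invertibility of $\Lambda_t\otimes\Lambda_t$ plus Proposition \ref{recall2} feeding into Proposition \ref{benattischrusfilippov_prop}. Your explicit remark on the transfer of invertibility to the tensor product is a useful detail the paper leaves implicit.
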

			%
		Proposition \eqref{recall2} and Corollary \eqref{corollary:SBFI} moreover imply that\begingroup\makeatletter\def\f@size{9}\check@mathfonts
		\def\maketag@@@#1{\hbox{\m@th\normalsize\normalfont#1}}%
		\begin{equation} \label{sbfi}
			\Lambda_t~\textrm{P-divisible, not CP-divisible} \implies \begin{cases}\dv{t}\normt{\Lambda_t[\Delta_\mu^{(d)}]}\le0, \qquad \forall\, t\ge0, \,\forall \, \Delta_\mu^{(d)},\\
				\\
				\exists \, t>0, \Delta_\mu^{(d^2)}\quad  \textrm{s.t.} \quad \dv{t}\normt{\Lambda_t\otimes \Lambda_t[\Delta_\mu^{(d^2)}]}>0.\end{cases}
		\end{equation}
		\endgroup
		where the notation $\Delta_\mu^{(d)}$ indicates a Helstrom matrix in $\Md{d}$. We refer to \eqref{sbfi} as the \emph{{Superactivation of Backflow of Information}} (SBFI).
		
		\begin{Remark}
			\label{SBFIvs SC}
			As already illustrated in the Introduction, such a phenomenon means that in order to have BFI in a tensor product dynamics of a bipartite quantum system it is not necessary to have it in one or the other of the two parties.
			As such, it reminds us of the Capacity Superactivation~\cite{hastingsSuperadditivityCommunicationCapacity2009,SmithQCapacity} in that one can send information through the tensor product of two
			quantum communication channels that by themselves cannot transmit any information. In the same vein, we shall show below instances of the tensor products of two dynamics not exhibiting BFI that show it (as will be the case, for example, in Section \ref{example:mixing}, for the mixtures of pure dephasing qubit dynamics and their tensor products).
		\end{Remark}

		\subsection{General Tensor Families}
		\label{sec:generaltensors}
		
		Let us now consider the case of the general families of tensor product maps $\Lambda_t^{(1)}\otimes \Lambda_t^{(2)}$; namely, the two parties need not evolve in time according to the same reduced dynamics. In the semigroup case, it turns out that the maps  $\Lambda_t^{(1)}\otimes \Lambda_t^{(2)}$ can be positive
		without both single-system maps being completely positive~\cite{BenattiNONDEC}.
		We shall see that, unlike for equal dynamics, the P-divisibility of $\Lambda_t^{(1)}\otimes \Lambda_t^{(2)}$ does not require
		the CP-divisibility of $\Lambda_t^{(1,2)}$.
		
		The next result, based on~\cite{BFP2004}, deals with necessary conditions for the P-divisibility of $\Lambda_t^{(1)}\otimes \Lambda_t^{(2)}$.
		
		\begin{Proposition}
			\label{prop:necPd}
			Let us consider time-dependent generators, as in~\eqref{GKSLgen},
			\begin{equation}
				\label{necPd1}
				\mathcal{L}_t^{(\alpha)}[\rho]=-i[H^{(\alpha)}(t),\rho]+\sum_{i,j=1}^{d^2-1} K_{ij}^{(\alpha)}(t) \left(F_i \rho F^{\dagger}_j-\frac{1}{2}\{F_j^{\dagger}F_i,\rho\}\right)\ ,
			\end{equation}
			of dynamical maps $\Lambda_t^{(\alpha)}$, $\alpha=1,2$, respectively.
			If $\Lambda^{(1)}_t \otimes \Lambda^{(2)}_t$ is P-divisible, then for all invertible $V \in M_d(\mathbb{C})$,
			\begin{equation}\label{conditionNEC}
				K^{(1)}(t)+\mathcal{V^{\dagger}} K^{(2)}(t) \mathcal{V}\ge 0,\quad \forall \, t\ge 0
			\end{equation} where $\mathcal{V}=[\mathcal{V}_{ij}] \in M_{d^2-1}(\mathbb{C})$ is such that $V F_i^{\dagger} V^{-1}=\sum_{j=1}^{d^2-1}{\mathcal{V}_{ij} F_j^{\dagger}}$.
		\end{Proposition}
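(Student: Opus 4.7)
The plan is to apply the Kossakowski criterion (Lemma 1) to the generator of the tensor product dynamics, $\mathcal{L}_t^{(1)}\otimes\mathrm{id} + \mathrm{id}\otimes\mathcal{L}_t^{(2)}$, evaluated on a carefully chosen family of orthogonal pairs of vectors in $\mathbb{C}^d\otimes\mathbb{C}^d$ indexed by the invertible matrix $V$ and by a generic vector $\mathbf{x}\in\mathbb{C}^{d^2-1}$. The construction of this family is what forces the matrix $\mathcal{V}$ tied to the conjugation $V F_i^{\dagger} V^{-1}=\sum_j \mathcal{V}_{ij} F_j^{\dagger}$ to appear.

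The first step is to fix the $V$-twisted maximally entangled test state $|\psi_V\rangle=\sum_{i=1}^{d}|i\rangle\otimes V|i\rangle$. The key algebraic property is the duality
\[
(Y\otimes\mathbb{1})\,|\psi_V\rangle \;=\; (\mathbb{1}\otimes V Y^{T} V^{-1})\,|\psi_V\rangle\qquad\forall\,Y\in M_d(\mathbb{C}),
\]
which, once Hermitian adjoints are taken on both sides, rewrites the action of $F_i$ on the first factor as a linear combination of the $F_j^{\dagger}$'s on the second factor governed exactly by the matrix $\mathcal{V}$ of the statement. For the orthogonal partner, I would take $|\phi_{\mathbf{x}}\rangle=(X\otimes\mathbb{1})|\psi_V\rangle - c_V(X)|\psi_V\rangle$, with $X=\sum_k x_k F_k$ and $c_V(X)=\langle\psi_V|(X\otimes\mathbb{1})|\psi_V\rangle/\langle\psi_V|\psi_V\rangle=\mathrm{tr}(XV^{\dagger}V)/\mathrm{tr}(V^{\dagger}V)$, which guarantees $\langle\phi_{\mathbf{x}}|\psi_V\rangle=0$.

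Next I would evaluate the quantity $\mathcal{G}_t(\phi_{\mathbf{x}},\psi_V)$ from Lemma~1 for the bipartite generator. Orthogonality kills the Hamiltonian and the anticommutator contributions, because they are all proportional either to $\langle\phi_{\mathbf{x}}|\psi_V\rangle$ or to $\langle\psi_V|\phi_{\mathbf{x}}\rangle$. What survives is the purely dissipative "jump" part, which splits as a sum of a term coming from $\mathcal{L}_t^{(1)}\otimes\mathrm{id}$ and one from $\mathrm{id}\otimes\mathcal{L}_t^{(2)}$. Using the dual biorthogonality of the $F_k$'s the first term reduces to $\mathbf{x}^{\dagger} K^{(1)}(t)\mathbf{x}$. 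For the second term I would use the identity above to transfer $F_i\otimes\mathbb{1}$ onto the second factor, paying the price $V F_i^{\dagger} V^{-1}=\sum_j \mathcal{V}_{ij}F_j^{\dagger}$, so that the relevant amplitudes become $(\mathcal{V}\mathbf{x})_j$ and the contribution reduces to $\mathbf{x}^{\dagger}\mathcal{V}^{\dagger} K^{(2)}(t)\mathcal{V}\mathbf{x}$. Summing, the Kossakowski condition $\mathcal{G}_t(\phi_{\mathbf{x}},\psi_V)\ge 0$ becomes
\[
\mathbf{x}^{\dagger}\!\left(K^{(1)}(t)+\mathcal{V}^{\dagger}K^{(2)}(t)\mathcal{V}\right)\mathbf{x}\;\ge\;0\qquad\forall\,\mathbf{x}\in\mathbb{C}^{d^2-1},
\]
which is precisely \eqref{conditionNEC}, since $V$ is arbitrary invertible and $t\ge 0$ is arbitrary.

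The main obstacle I expect is purely bookkeeping rather than conceptual: the duality between the two factors of $|\psi_V\rangle$ naturally produces transposes and complex conjugates, and one must check that the matrix collected from the second-party dissipator is exactly $\mathcal{V}^{\dagger}K^{(2)}(t)\mathcal{V}$ with $\mathcal{V}$ defined through $V F_i^{\dagger} V^{-1}=\sum_j \mathcal{V}_{ij}F_j^{\dagger}$, and not through a variant such as $V F_i^{T}V^{-1}$ or $V F_i V^{-1}$. A secondary point worth checking is that as $\mathbf{x}$ ranges over $\mathbb{C}^{d^2-1}$ the associated vectors $|\phi_{\mathbf{x}}\rangle$ indeed span a full copy of $\mathbb{C}^{d^2-1}$ inside $|\psi_V\rangle^{\perp}$, so that the extracted quadratic form testifies the full matrix inequality; this is granted by the fact that the map $X\mapsto(X\otimes\mathbb{1})|\psi_V\rangle$ is injective on traceless-modulo-identity operators whenever $V$ is invertible.
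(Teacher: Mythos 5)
Your overall strategy --- applying Lemma~\ref{chp:prop_kossak} to the bipartite generator on test vectors built from a $V$-twisted maximally entangled state --- is the same skeleton as the paper's proof (which likewise identifies vectors of $\mathbb{C}^d\otimes\mathbb{C}^d$ with $d\times d$ matrices $\Phi,\Psi$), and your observations that orthogonality kills the Hamiltonian and anticommutator terms and that the $|\phi_{\mathbf{x}}\rangle$ span a full $(d^2-1)$-dimensional subspace are correct. However, the specific test family you choose does not produce condition~\eqref{conditionNEC}, for two reasons. Writing your vectors in matrix form, $\Psi=V^T$ and $\Phi=(X-c_V(X)\mathds{1})V^T$, the first-party amplitudes are $\Tr\big(F_i^\dagger\,(X-c_V(X)\mathds{1})\,\overline{V^\dagger V}\big)$, not $x_i$: for non-unitary $V$ the factor $\overline{V^\dagger V}\neq\mathds{1}$ distorts the quadratic form, so the first term is $\mathbf{a}^\dagger K^{(1)}(t)\mathbf{a}$ with $\mathbf{a}$ an invertible but nontrivial linear image of $\mathbf{x}$, not $\mathbf{x}^\dagger K^{(1)}(t)\mathbf{x}$. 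Second, and more seriously, the ricochet identity $(Y\otimes\mathds{1})|\psi_V\rangle=(\mathds{1}\otimes VY^TV^{-1})|\psi_V\rangle$ carries a transpose that cannot be absorbed into a redefinition of $V$: tracking it through, the map from the first-party matrix argument $W'=\Phi\Psi^\dagger$ to the second-party one $(\Psi^\dagger\Phi)^T$ is the \emph{anti}-automorphism $W'\mapsto (V^\dagger)^{-1}W'^{\,T}V^\dagger$, not the automorphism $W\mapsto V^{-1}WV$ whose matrix is the $\mathcal{V}$ of the statement. Your construction therefore proves $K^{(1)}(t)+\mathcal{B}_V^\dagger K^{(2)}(t)\mathcal{B}_V\ge0$ with $\mathcal{B}_V$ representing conjugation composed with transposition --- a true but different necessary condition. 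For instance, at $V=\mathds{1}$ it yields $K^{(1)}(t)+\mathcal{T}^\dagger K^{(2)}(t)\mathcal{T}\ge0$, with $\mathcal{T}$ the matrix of transposition in the $F$-basis, rather than $K^{(1)}(t)+K^{(2)}(t)\ge0$ as required by~\eqref{k1+k2}.

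This is precisely the obstacle the paper's proof is engineered to remove: starting from the general inequality~\eqref{aux1}, it chooses $\Phi=VS^{-1}$ and $\Psi^\dagger=SV^{-1}W$, where $S$ is a similarity realizing $S(V^{-1}WV)S^{-1}=(V^{-1}WV)^T$ (every matrix being similar to its transpose), so that $\Phi\Psi^\dagger=W$ exactly and $(\Psi^\dagger\Phi)^T=V^{-1}WV$ with no residual transpose and no $V^\dagger V$ distortion. That extra, $W$-dependent matrix $S$ is the missing idea; it has no counterpart in your ansatz, where both $|\psi\rangle$ and $|\phi\rangle$ are rigidly tied to the single reference state $|\psi_V\rangle$. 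The transpose you flag as ``bookkeeping'' is therefore not bookkeeping: without the $S$ insertion the argument proves a genuinely different statement.
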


		\begin{Remark}
			\label{remarkD1}
			Notice that, by choosing $K^{(2)}(t)=0$, one reduces to maps of the form $\Lambda_t\otimes{\rm id}$; then, the above result regards the CP-divisibility of $\Lambda_t$ and makes Proposition~\ref{prop:Cp-div} a corollary of Proposition~\ref{prop:necPd}.
			Instead, taking $V=\mathds{1} \in M_d(\mathbb{C})$, $\mathcal{V}$ becomes the identity matrix in $M_{d^2-1}(\mathbb{C})$, and the P-divisibility of $\Lambda^{(1)}_t\otimes\Lambda^{(2)}_{t}$ implies
			\begin{equation} \label{k1+k2}
				K^{(1)}(t)+K^{(2)}(t) \ge 0.
			\end{equation}
			{It} is then Proposition \ref{benattischrusfilippov_prop} regarding the tensor product $\Lambda_t\otimes \Lambda_t$ that becomes a corollary of the previous proposition; indeed,~\eqref{k1+k2} reduces to $K(t)\ge0$. This, as already remarked, is equivalent to the CP-divisibility of the dynamics.
			
			On the other hand, if $\Lambda_t\otimes\Lambda_t$ is not P-divisible, one could restore P-divisibility by suitably changing the time-dependent Kossakowski matrix of the second-party dynamics, namely, by choosing $K_\varepsilon(t)=K(t)+\varepsilon\Gamma(t)$.
			Then,~\eqref{k1+k2} reads $K(t)+K_\varepsilon(t)=2K(t)+\varepsilon\Gamma(t)\ge0$. Therefore, if the single-system dynamics $\Lambda^{(1)}_t$
			is not CP-divisible, that is, if $K(t)$ is not positive semidefinite, then, in order to restore the P-divisibility of the tensor product dynamics, one has to seek a generator of the second-party time-evolution which is sufficiently strong.
			In practice, this means that to avoid SBFI due to $\Lambda_t\otimes\Lambda_t$ by changing the dynamics of the second party, one in general needs more than just a small perturbation of the second-party generator.
		\end{Remark}

		\begin{proof}[Proof of Proposition~\ref{prop:necPd} ]

			By Lemma \ref{chp:prop_kossak}, the intermediate map $\Lambda^{(1)}_{t,s} \otimes \Lambda^{(2)}_{t,s}$ is positive for any $t \ge s \ge 0$, if and only if
			\begin{equation}
				\label{necPd2}
				\mathcal{G}_s(\phi,\psi):=\bra{\phi}\mathcal{L}^{(1)}_s\otimes \mathrm{id}_d + \mathrm{id}_d\otimes \mathcal{L}^{(2)}_s\left[\ket{\psi}\bra{\psi}\right]\ket{\phi} \ge 0,
			\end{equation}
			for arbitrary orthogonal $\ket{\psi}, \ket{\phi} $. Let $\Phi=[\phi_{\alpha \beta}]$, $\Psi=[\psi_{\alpha \beta}]$ be the matrices whose entries are the vector's components with respect to a fixed orthonormal basis $\{\ket{\alpha}\otimes\ket{\beta}\}_{\alpha\beta}$. These matrices are Hilbert--Schmidt orthogonal, namely, $\braket{\phi}{\psi}=\Tr(\Phi^{\dagger} \Psi)=0$. Vice versa, given two Hilbert--Schmidt orthogonal $d\times d$ matrices $\Psi$ and $\Phi$, their entries can be taken as components of two orthogonal $d$-dimensional vectors $\ket{\psi}$ and $\ket{\phi}$ in $\mathbb{C}^{d^2}$ with respect to the chosen basis.
			
			The orthogonality of the vectors $\ket{\psi}$ and $\ket{\phi}$ reduces~\eqref{necPd2} to sums of products of contributions of the form
			\begin{eqnarray*}
				\bra{\phi}F_i\otimes\mathds{1}\ket{\psi}&=&\sum_{\alpha,\beta,\gamma=1}^d\Phi_{\alpha\beta}^*\Psi_{\gamma\beta}(F_i)_{\alpha\gamma}=\Tr\Big(F_i\Psi\Phi^\dagger\Big)\ , \\
				\bra{\phi}\mathds{1}\otimes F_i\ket{\psi}&=&\sum_{\alpha,\beta,\gamma=1}^d\Phi^*_{\alpha\beta}\Psi_{\alpha\gamma}(F_i)_{\beta\gamma}=\Tr\Big(F_i(\Phi^\dagger\Psi)^T\Big)
			\end{eqnarray*}
			and their conjugates, where $()^T$ denotes matrix transposition.
			Then, \eqref{necPd2} can be rewritten as
			\begin{equation}
				\label{aux1}
				0 \le \sum_{i,j=1}^{d^2-1} \bigg( K_{ij}^{(1)}(s) \, \overline{\Tr(F_i^{\dagger} \Phi \Psi^{\dagger})}\Tr(F_j^{\dagger} \Phi \Psi^{\dagger}) + K_{ij}^{(2)}(s)\, \overline{\Tr(F_i^{\dagger} (\Psi^{\dagger}\Phi )^{T})}\Tr(F_j^{\dagger}(\Psi^{\dagger}\Phi )^{T}) \bigg).
			\end{equation}
			
			{Let} $W$ be a generic traceless matrix in $M_{d}(\mathbb{C})$. Since every matrix is similar to its transposed matrix, given any invertible $V \in M_d(\mathbb{C})$, there exists an invertible $S$ such that $S(V^{-1}WV)S^{-1}=(V^{-1}WV)^{T}$. Choose $\Phi := V S^{-1}$, $\Psi^{\dagger} := (SV^{-1}W)$, so that $\Phi \Psi^{\dag}=W$ and $(\Psi^{\dag}\Phi)^T=V^{-1}WV$.
			With these choices,~\eqref{aux1} becomes
			\begin{equation*}
				0 \le \sum_{i,j=1}^{d^2-1} \bigg( K_{ij}^{(1)}(s) \, \overline{\Tr(F_i^{\dagger} W)}\Tr(F_j^{\dagger} W) + K_{ij}^{(2)}(s)\, \overline{\Tr(F_i^{\dagger} V^{-1}W V)}\Tr(F_j^{\dagger}V^{-1}WV) \bigg).
			\end{equation*}
			{Since} $ \Tr(W)=0$, it holds that $W=\sum_{i=1}^{d^2-1}w_i F_i$ and
			$\displaystyle
			\Tr(F_j^{\dagger} V^{-1}W V)=
			\sum_{k=1}^{d^2-1} \mathcal{V}_{jk} w_k=:v_j$.
			Finally, since $W$ is a generic traceless $d\times d$ matrix, the vector $\ket{w}=\sum_{i=1}^{d^2-1} w_i \ket{i}$ consisting of the components of $W$ with respect to the chosen Hilbert--Schmidt orthonormal basis $\{F_i\}_{i_1}^{d^2-1}$ spans $\mathbb{C}^{d^2-1}$.
			Therefore, the previous inequality yields:
			\begin{align*}
				0 \le \sum_{i,j=1}^{d^2-1} \big( K_{ij}^{(1)}(s) \, \overline{{w}_i} w_j+ K_{ij}^{(2)}(s)\, \overline{v_i} v_j\big) = \bra{w}K^{(1)}(s)+\mathcal{V^{\dagger}} K^{(2)}(s) \mathcal{V}\ket{w}\,,
			\end{align*}
			from which $K^{(1)}(s)+\mathcal{V^{\dagger}} K^{(2)}(s) \mathcal{V}\geq 0$ follows.
		\end{proof}
		\vskip .2cm
		
		We now look for a sufficient condition for the P-divisibility of generic tensor product dynamics $\Lambda_t^{(1)}\otimes \Lambda_t^{(2)}$.
		From \eqref{necPd1}, without loss of generality~\cite{HallCanonical}, the generators of the local maps can  be recast in a diagonal form with respect to a time-dependent family of Hilbert--Schmidt operators $F_{k}^{(\alpha)}(t)$, $\alpha=1,2$;
		furthermore, in the following Proposition, we will restrict to the case of an Hermitian Hilbert--Schmidt orthonormal basis ${F_k^{(\alpha)}(t)=\left(F_k^{(\alpha)}(t)\right)^\dagger}$.
		
		\begin{Remark}\label{rmk:onlynoisematters}
			As already remarked in the course of the proof of the previous proposition, only the terms
			$
			\sum_{i,j=1}^{d^2-1} K_{ij}^{(\alpha)}(t) \,\, F_i \,\rho\, \daga{F}_j
			$ from the time-local generators \eqref{necPd1} non-trivially contribute to $\mathcal{G}_s(\phi,\psi)$, for orthogonal $\ket{\phi},\ket{\psi}$, as one sees from \eqref{aux1} in the proof of Proposition \ref{prop:necPd}. The commutator and anti-commutator terms do not play any role. Therefore, focusing on only the dissipative part of the generator as in the subsequent proposition is no restriction.
		\end{Remark}
		
		\begin{Proposition}\label{prop:sufficientPD}
			Let the generators of $\Lambda_{t}^{(\alpha)}$, $\alpha=1,2$, be
			\begin{equation}
				\label{generatorsuffPd}
				\mathcal{L}_{t}^{(\alpha)}[\rho]=\sum_{k=1}^{d^2-1} \gamma_k^{(\alpha)}(t) \left(F_k^{(\alpha)}(t) \,\rho\, F_k^{(\alpha)}(t)-\frac{1}{2}\acomm{\left(F_k^{(\alpha)}(t)\right)^2}{\rho}\right) \,,
			\end{equation}
			diagonal with respect to a Hilbert--Schmidt basis of traceless and Hermitian operators $\left(F_k^{(1,2)}(t)\right)^\dagger=F_k^{(1,2)}(t)$ $\in\Md{d}$. Suppose that the rates $\gamma_k^{(1,2)}(t)$ are all positive semidefinite functions of time, except for at most two $\gamma_i^{(1)}(t)$ and $\gamma_j^{(2)}(t)$, and that, for all $t$ $\ge 0$, $\gamma_k^{(1)}(t)+\gamma_i^{(1)}(t) \ge 0$ for all $ \, k \ne i$ and $\gamma_k^{(2)}(t)+\gamma_j^{(2)}(t) \ge 0$ for all $k \ne j$. If
			\begin{equation}
				\label{aux2}
				\gamma_k^{(1)}(t)+\gamma_j^{(2)}(t) \ge 0 \qquad
				\textrm{and} \qquad\gamma_i^{(1)}(t)+\gamma_k^{(2)}(t) \ge 0 \,,
			\end{equation}
			for all $t\ge 0$ and all  $k=1,\dots,d^2-1$,  then $\Lambda_{t}^{(1)}\otimes\Lambda_{t}^{(2)}$ is P-divisible.
		\end{Proposition}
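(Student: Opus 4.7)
The plan is to verify the P-divisibility criterion of Lemma~\ref{chp:prop_kossak} applied to $\Lambda_t^{(1)}\otimes\Lambda_t^{(2)}$: I must show that $\mathcal{G}_s(\phi,\psi)\ge 0$ for every $s\ge 0$ and every pair of orthogonal vectors $\ket{\phi},\ket{\psi}\in\mathbb{C}^{d^2}$. By Remark~\ref{rmk:onlynoisematters} the anti-commutator contributions of the two diagonal generators drop out under the orthogonality $\braket{\phi}{\psi}=0$, leaving
\[
\mathcal{G}_s(\phi,\psi)=\sum_{k=1}^{d^2-1}\gamma_k^{(1)}(s)\,|A_k|^2+\sum_{k=1}^{d^2-1}\gamma_k^{(2)}(s)\,|B_k|^2,
\]
with $A_k:=\bra{\phi}F_k^{(1)}(s)\otimes\mathds{1}\ket{\psi}$ and $B_k:=\bra{\phi}\mathds{1}\otimes F_k^{(2)}(s)\ket{\psi}$.

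My first observation would be that the cross-conditions~\eqref{aux2}, evaluated at $k=i$ (or $k=j$), force $\gamma_i^{(1)}(s)+\gamma_j^{(2)}(s)\ge 0$, so at any fixed $s$ at most one of these two rates can be strictly negative; when both are non-negative every $\gamma_k^{(\alpha)}(s)\ge 0$ and $\mathcal{G}_s\ge 0$ is immediate. By symmetry between the two parties it suffices to handle $\gamma_i^{(1)}(s)<0\le\gamma_j^{(2)}(s)$, in which case the cross-condition additionally gives $\gamma_k^{(2)}(s)\ge -\gamma_i^{(1)}(s)>0$ for every $k$. Combining the within-party-$1$ bound $\gamma_k^{(1)}\ge -\gamma_i^{(1)}$ ($k\neq i$) in the first sum with the cross bound $\gamma_k^{(2)}\ge -\gamma_i^{(1)}$ in the second, a short calculation yields
\[
\mathcal{G}_s(\phi,\psi)\;\ge\;\gamma_i^{(1)}(s)\bigl(2|A_i|^2-S_A-S_B\bigr),\qquad S_A:=\sum_k|A_k|^2,\ S_B:=\sum_k|B_k|^2.
\]
Since $\gamma_i^{(1)}(s)<0$, the entire argument collapses to the purely linear-algebraic inequality $2|A_i|^2\le S_A+S_B$.

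For the latter I would use the matrix representation $\Phi,\Psi\in\Md{d}$ of $\ket{\phi},\ket{\psi}$ introduced in the proof of Proposition~\ref{prop:necPd}: orthogonality reads $\Tr(\Phi^\dagger\Psi)=0$, and Parseval in the Hermitian traceless basis gives $S_A=\|M\|_{\mathrm{HS}}^2$ and $S_B=\|N\|_{\mathrm{HS}}^2$ with $M:=\Psi\Phi^\dagger$ and $N:=\Phi^\dagger\Psi$. Splitting $M=M_H+iM_I$ into Hermitian pieces and exploiting the Hermiticity of $F_i^{(1)}(s)$, one finds $|A_i|^2=(\Tr(F_i^{(1)}M_H))^2+(\Tr(F_i^{(1)}M_I))^2$. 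Setting $\alpha_k:=\Tr(F_k^{(1)}M_H)$ and $\beta_k:=\Tr(F_k^{(1)}M_I)$, so that $\|\alpha\|^2+\|\beta\|^2=\|M\|_{\mathrm{HS}}^2$ and $\alpha\cdot\beta=\Tr(M_HM_I)$, the quantity $|A_i|^2=\alpha_i^2+\beta_i^2$ is a diagonal entry of the positive, rank-$2$ matrix $\alpha\alpha^T+\beta\beta^T$ and is therefore bounded by its largest eigenvalue,
\[
|A_i|^2\;\le\;\frac{\|M\|_{\mathrm{HS}}^2+\sqrt{(\|\alpha\|^2-\|\beta\|^2)^2+4(\alpha\cdot\beta)^2}}{2}.
\]

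The main obstacle, and the step I expect to be the most delicate, is to identify the discriminant under the square root with $|\Tr(M^2)|^2$ via the algebraic identity $\Tr(M^2)=(\|\alpha\|^2-\|\beta\|^2)+2i(\alpha\cdot\beta)$, and then to close the loop through the cyclic-trace identity $\Tr(M^2)=\Tr((\Psi\Phi^\dagger)^2)=\Tr((\Phi^\dagger\Psi)^2)=\Tr(N^2)$ combined with Cauchy--Schwarz, $|\Tr(N^2)|=|\langle N^\dagger,N\rangle_{\mathrm{HS}}|\le\|N\|_{\mathrm{HS}}^2=S_B$. These together give $|A_i|^2\le(S_A+S_B)/2$, hence $\mathcal{G}_s\ge 0$; the remaining case $\gamma_j^{(2)}(s)<0\le\gamma_i^{(1)}(s)$ is handled by interchanging the two parties and applying the same mechanism to $(\Phi^\dagger\Psi)^T$ in place of $\Psi\Phi^\dagger$.
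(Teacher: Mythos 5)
Your proposal is correct, and its overall architecture coincides with the paper's: reduce to the Kossakowski criterion of Lemma~\ref{chp:prop_kossak}, discard the (anti)commutator parts, write $\mathcal{G}_s$ as a $\gamma$-weighted sum of $|A_k|^2$ and $|B_k|^2$, observe via \eqref{aux2} that at most one of $\gamma_i^{(1)}(s),\gamma_j^{(2)}(s)$ can be negative at a given time, and absorb that single negative rate using the pairwise-sum hypotheses so that everything hinges on one purely linear-algebraic inequality. The only genuine divergence is in how that inequality is established. The paper exploits the identity $\sum_k\big(\Tr(F_k^{(1)}\Phi\Psi^{\dagger})\big)^2=\Tr\big((\Phi\Psi^{\dagger})^2\big)=\Tr\big(((\Psi^{\dagger}\Phi)^T)^2\big)=\sum_k\big(\Tr(F_k^{(2)}(\Psi^{\dagger}\Phi)^T)\big)^2$ (sums of \emph{complex squares}, not moduli), isolates the distinguished term, and applies the triangle inequality to get $|B_j|^2\le S_A+\sum_{k\ne j}|B_k|^2$ in one line. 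You reach the equivalent bound $2|A_i|^2\le S_A+S_B$ by bounding a diagonal entry of the rank-two Gram matrix $\alpha\alpha^T+\beta\beta^T$ by its top eigenvalue, identifying the discriminant with $|\Tr(M^2)|^2$, and closing with cyclicity of the trace plus Cauchy--Schwarz, $|\Tr(N^2)|\le\norm{N}_{\mathrm{HS}}^2$. Both routes rest on the same cyclic-trace fact $\Tr\big((\Psi\Phi^{\dagger})^2\big)=\Tr\big((\Phi^{\dagger}\Psi)^2\big)$; yours is longer but yields the sharp intermediate estimate $2|A_i|^2\le S_A+|\Tr(M^2)|$, whereas the paper's triangle-inequality shortcut gets to the needed conclusion with less machinery. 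All steps in your chain check out (tracelessness of $\Psi\Phi^{\dagger}$ and $\Phi^{\dagger}\Psi$ from $\braket{\phi}{\psi}=0$, Parseval in the Hermitian traceless basis, reality of $\alpha,\beta$), so I see no gap.
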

		The proof is reported in  Appendix~\ref{app:APPproofs}. The above assumptions in Proposition~\ref{prop:sufficientPD} are such that, while one rate for one system and one for the other are allowed to become negative, all sums of pairs of rates of either systems are instead forbidden to do so. In order to better illustrate the assumed properties of the rates, let us consider the following example.
		
		\begin{Example}
			\label{Ex:Paulimaps2}
			Let  $\Lambda_t^{(1,2)}$ be Pauli maps as in Example~\ref{ex:Paulimaps} defined by master equations with rates
			\begin{align}
				\gamma_1^{(1)}(t)&=1\,,\qquad \gamma_1^{(1)}(t)=1\,,\qquad \gamma_3^{(1)}(t)=\sin(t) \,, \\
				\gamma_1^{(2)}(t)&=1\,,\qquad\gamma_1^{(2)}(t)=1\,,\qquad \gamma_3^{(2)}(t)=-\sin(t) \,,
			\end{align}
			and a suitable $\lambda$ to ensure CP of $\Lambda_t^{(2)}$ (see Example~\ref{ex:Paulimaps}).
			Both dynamics are P-divisible; indeed, $\gamma_i^{(\alpha)}(t)+\gamma_j^{(\alpha)}(t)\ge0$, $\forall i \ne j$, $\alpha=1,2$.
			However, they are not CP-divisible for $\gamma_3^{(1,2)}(t)$, which becomes negative. 
			Nevertheless, $\gamma^{(1)}_3(t)+\gamma^{(2)}_3(t)=0$. Then, the conditions in \eqref{aux2}  are satisfied and ${\Lambda_t^{(1)}\otimes\Lambda_t^{(2)}}$ is P-divisible.
		\end{Example}

		
		Then, Propositions \ref{prop:necPd} and \ref{prop:sufficientPD} yield the following necessary and sufficient conditions for the
		P-divisibility of tensor products of qubit 
		Pauli maps.
		
		\begin{Proposition}
			\label{prop:qubit2tensordiff}
			Let $\Lambda_t^{(\alpha)}$, $\alpha=1,2$, be CPTP Pauli maps with time-local generators given by
			\begin{equation}
				\mathcal{L}_t^{(\alpha)}[\rho]=\frac{1}{2}\sum_{k=1}^3 \gamma_k^{(\alpha)}(t)\,(\sigma_k\, \rho\, \sigma_k-\rho)\,.
				\label{pauliagain}
			\end{equation}
			
			{Their} tensor product, $\Lambda_t^{(1)}\otimes \Lambda_t^{(2)}$, is P-divisible if and only if both $\Lambda_{t}^{(1,2)}$ are P-divisible and
			\begin{equation}
				\gamma_i^{(1)}(t)+\gamma_j^{(2)}(t) \ge 0\,, \qquad \forall \, t\ge0\,, \quad \forall \,\, i,j=1,2,3\,.
				\label{pdtensor_condition}
			\end{equation}
		\end{Proposition}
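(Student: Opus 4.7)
The plan is to split the biconditional into two directions: first show that P-divisibility of the tensor product forces both individual P-divisibility and the nine cross-pair conditions (via a judicious use of Proposition~\ref{prop:necPd}), and then appeal to Proposition~\ref{prop:sufficientPD} for the converse.

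For the forward direction, I would begin by extracting individual P-divisibility. If $\Lambda_{t,s}^{(1)}\otimes\Lambda_{t,s}^{(2)}$ is positive, then on any product state $\rho_1\otimes\rho_2$ it yields $\Lambda_{t,s}^{(1)}[\rho_1]\otimes\Lambda_{t,s}^{(2)}[\rho_2]\ge 0$; both intertwiners inherit trace-preservation from the generator in~\eqref{pauliagain}, so the outputs are nonzero Hermitian operators. A tensor product of nonzero Hermitian operators can be positive semidefinite only when both factors are (their spectra are pairwise products), and the unit trace excludes the negative-definite branch, so each $\Lambda_{t,s}^{(\alpha)}$ is positive and the individual Pauli dynamics are P-divisible. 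Next, I invoke Proposition~\ref{prop:necPd} with three choices of invertible $V\in M_2(\mathbb{C})$: the identity, and the two unitaries $V_\pm=\exp\!\big(\mp i\pi(\sigma_1+\sigma_2+\sigma_3)/(3\sqrt{3})\big)$ implementing rotations by $\pm 2\pi/3$ about $(1,1,1)/\sqrt{3}$. These are precisely the elements of $SU(2)$ whose adjoint action cyclically permutes $\sigma_1,\sigma_2,\sigma_3$, so the resulting $\mathcal{V}\in M_3(\mathbb{C})$ are permutation matrices. Since each $K^{(\alpha)}(t)$ is diagonal with entries $\gamma_k^{(\alpha)}(t)$, a direct computation gives $(\mathcal{V}^\dagger K^{(2)}(t)\mathcal{V})_{ab}=\gamma_{\pi^{-1}(a)}^{(2)}(t)\,\delta_{ab}$, so $K^{(1)}(t)+\mathcal{V}^\dagger K^{(2)}(t)\mathcal{V}\ge 0$ reduces to $\gamma_a^{(1)}(t)+\gamma_{\pi^{-1}(a)}^{(2)}(t)\ge 0$ for $a=1,2,3$. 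The three permutations (identity and the two $3$-cycles) together exhaust all nine pairs $(i,j)\in\{1,2,3\}^2$, which is exactly~\eqref{pdtensor_condition}.

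For the converse, assume both $\Lambda_t^{(\alpha)}$ are P-divisible and that~\eqref{pdtensor_condition} holds. Example~\ref{ex:Paulimaps} shows individual P-divisibility is equivalent to $\gamma_i^{(\alpha)}(t)+\gamma_j^{(\alpha)}(t)\ge 0$ for $i\ne j$, which forces at most one index $i(t)\in\{1,2,3\}$ to carry a negative rate at any given time on party $1$, and similarly at most one $j(t)$ on party $2$. At each fixed $t$ the hypotheses of Proposition~\ref{prop:sufficientPD} are then satisfied: the within-party inequalities $\gamma_k^{(1)}(t)+\gamma_{i(t)}^{(1)}(t)\ge 0$ for $k\ne i(t)$ are exactly the individual P-divisibility conditions for $\Lambda_t^{(1)}$ (and analogously for party~$2$), while the cross-party inequalities $\gamma_k^{(1)}(t)+\gamma_{j(t)}^{(2)}(t)\ge 0$ and $\gamma_{i(t)}^{(1)}(t)+\gamma_k^{(2)}(t)\ge 0$ are a subset of the nine assumed in~\eqref{pdtensor_condition}. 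Applying Proposition~\ref{prop:sufficientPD} thus yields the P-divisibility of $\Lambda_t^{(1)}\otimes\Lambda_t^{(2)}$.

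The chief obstacle I anticipate is handling the pointwise-in-time nature of the argument: the indices $i(t),j(t)$ of the potentially negative rates can jump with $t$, whereas Proposition~\ref{prop:sufficientPD} is phrased with fixed $i,j$. This is, however, benign because the underlying criterion of Lemma~\ref{chp:prop_kossak} is verified at each $t$ independently; strictly speaking I am applying the estimate produced in the proof of Proposition~\ref{prop:sufficientPD} at every instant, with possibly different distinguished indices. A secondary, purely computational point is verifying that $V_\pm$ really induce the $3$-cycles on $(\sigma_1,\sigma_2,\sigma_3)$; this follows from the standard formula for the adjoint representation of $SU(2)$ evaluated on the axis $(1,1,1)/\sqrt{3}$.
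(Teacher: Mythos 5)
Your proof is correct and follows essentially the same route as the paper: individual P-divisibility is extracted from positivity on product states, the nine cross conditions come from Proposition~\ref{prop:necPd} applied with unitaries whose adjoint action permutes the Pauli matrices (the paper uses the transpositions $V^{kl}=(\sigma_k+\sigma_l)/\sqrt{2}$ where you use the two $3$-cycles about the $(1,1,1)$ axis, which cover the same nine pairs), and sufficiency follows from Proposition~\ref{prop:sufficientPD}. Your explicit handling of the possibly time-dependent indices $i(t),j(t)$ when invoking Proposition~\ref{prop:sufficientPD} pointwise is a small but welcome refinement of a step the paper passes over silently.
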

		
		The proof is reported in Appendix~\ref{app:APPproofs}. We summarize the divisibility properties of Pauli dynamics and the corresponding conditions on its generator in Table~\ref{table:qubitdiv}.
		
		\begin{table}[H]
			\caption{{Divisibility} 
				properties of Pauli maps: necessary and sufficient conditions.}
			\label{table:qubitdiv}
			\newcolumntype{C}{>{\centering\arraybackslash}X}
			\begin{tabularx}{\textwidth}{CC}
				\toprule
				$\Lambda_t$ CP-d& $\gamma_i(t)\ge0$ \\
				
				$\Lambda_t$ P-d& $\gamma_i(t)+\gamma_j(t)\ge0$, $i \ne j$\\
				
				$\Lambda_t\otimes \Lambda_t$ P-d& $\gamma_i(t)\ge0$ \\
				$\Lambda_t^{(1)}\otimes \Lambda_t^{(2)}$ P-d
				&
				$\Lambda_t^{(1,2)}$ P-d, $\gamma_i^{(1)}(t)+\gamma_j^{(2)}(t)\ge0$  \\
				\bottomrule
			\end{tabularx}
			
	\end{table}

	It is worth stressing the difference between the time-dependent case and the time-independent regime; in the latter case, CP- and P-divisibility are equivalent to the complete positivity, respectively, and the positivity of Pauli maps with rates $\gamma^{(\alpha)}_k(t)=\gamma^{(\alpha)}_k$ constant in time. 
	In such a case, it cannot be that both maps are positive but not completely positive; indeed, if it were so, there surely exist two negative rates, say, $\gamma^{(1)}_i$ and $\gamma^{(2)}_j$. Then, their sum is also negative, contradicting Equation~\eqref{pdtensor_condition}.
	As seen in Example~\ref{Ex:Paulimaps2}, in the time-dependent case, both rates $\pm\sin(t)$ can change sign without spoiling the P-divisibility of the tensor product.
	
	\begin{Remark}
		If $\Lambda_t$ is P-divisible but not CP-divisible, one can imagine keeping one party's evolution fixed and varying the second party sufficiently to achieve a  ${\Lambda}_t\otimes\widetilde{\Lambda}_t$ that is P-divisible, thereby eliminating SBFI through this variation in one of the local environments (for Pauli dynamics, this would mean varying the second-party rates until the conditions in~\eqref{pdtensor_condition} are matched). However, the map $\Lambda_t\otimes \widetilde{\Lambda}_t$ cannot be CP-divisible (since $\Lambda_t$ is not). Corollary \ref{corollary:SBFI} then implies that one can recover SBFI by doubling the system to a four-party dynamics $(\Lambda_t\otimes \widetilde{\Lambda}_t)\otimes(\Lambda_t\otimes \widetilde{\Lambda}_t)$ that would not then be P-divisible.
	\end{Remark}

	\section{Mixtures of Pure Dephasing Processes: Two-Qubit Divisibility Diagram}\label{example:mixing}
	To further illustrate the results of the previous sections, we shall now study the divisibility properties of a class of maps obtained via convex mixtures of {CP-divisible} dynamics in relation to variations in their time-local generators. The main goal is
	the characterization of a ``divisibility diagram'' of the tensor products of such mixtures.
	In particular, we are interested in mixtures of pure qubit dephasing semigroups of completely positive and trace-preserving (CPTP) maps  $\Phi_t^{(k)}=e^{t \,\mathcal{L}_k}$ arising from Pauli generators
	\begin{equation}
		\qquad \mathcal{L}_k[\rho]=\sigma_k\, \rho\, \sigma_k -\rho, \qquad k=1,2,3 \,.
		\label{simple_dephasing}
	\end{equation}
	
	{The} latter are Pauli maps as in Example~\ref{ex:Paulimaps}:
	$$
	\Phi_t^{(i)}[\sigma_\mu]=\lambda_\mu^{(i)}(t)\,\sigma_\mu\ , \quad {\lambda_\mu^{(i)}(t)=e^{-2t}, \, i \ne \mu \in \{1,2,3\}}
	\ ,\quad \lambda_i^{(i)}(t)=\lambda_0^{(i)}(t)=1\ .
	$$
	
	{Given} weights $p_k\ge0, \sum_{k=1}^3 p_k=1$, one then defines their mixtures
	\begin{equation}
		\Phi_t^{\vb{p}}:=p_1 \Phi_t^{(1)}+p_2 \Phi_t^{(2)}+p_3 \Phi_t^{(3)}\,.
		\label{mixdephasing}
	\end{equation}
	
	{These} convex combinations are also CPTP maps and satisfy $\Phi_t^{\vb{p}}[\sigma_\mu]=\lambda_\mu^{(\vb{p})}\sigma_\mu$,
	where 
	\begin{equation}
		\begin{aligned}
			\lambda_0^{\vb{p}}(t)=1\,,\qquad 	\lambda_k^{\vb{p}}(t)=&p_k+e^{-2t}(1-p_k)\ , \quad k=1,2,3\ ,
		\end{aligned}
		\label{mixeigenvalues}
	\end{equation}
	never vanish. The dynamics is thus invertible,
	$\displaystyle\left(\Phi_t^{\vb{p}}\right)^{-1}[\sigma_\mu]=\frac{1}{\lambda_\mu^{(\vb{p})}}\,\sigma_\mu$,
	and thus arises from a time-local Pauli generator $\mathcal{L}_t^{\vb{p}}=\dot\Phi_{t}^{\vb{p}}{\left(\Phi_{t}^{\vb{p}}\right)}^{-1}$; namely,
	\begin{equation*}
		\mathcal{L}_t^{\vb{p}}[\rho]=\frac{1}{2}\sum_{k=1}^3 \gamma_k^{\vb{p}}(t) \left(\sigma_k\,\rho\, \sigma_k-\rho\right)\ ,
	\end{equation*}
	where 
	the time-dependent rates $\gamma_k^{\vb{p}}(t)$ are related to the weights $\vb{p}=(p_1,p_2,p_3)$ in the following {way:} 
	\begin{subequations}\label{mixrates}
		\begin{eqnarray}
			\gamma_1^{\vb{p}}(t)=&\mu_1^{\vb{p}}(t)-\mu_2^{\vb{p}}(t)-\mu_3^{\vb{p}}(t)\,, \\
			\gamma_2^{\vb{p}}(t)=&-\mu_1^{\vb{p}}(t)+\mu_2^{\vb{p}}(t)-\mu_3^{\vb{p}}(t)\,,\\
			\gamma_3^{\vb{p}}(t)=&-\mu_1^{\vb{p}}(t)-\mu_2^{\vb{p}}(t)+\mu_3^{\vb{p}}(t)\,,
		\end{eqnarray}
	\end{subequations}
	{with}
	\begin{subequations}
		\label{g-aux}
		\begin{eqnarray}
			\label{m1}
			\mu_1^{\vb{p}}(t)&=&-\frac{p_2+p_3}{p_2+p_3+p_1\, e^{2t}}=-\frac{1-p_1}{1+p_1(e^{2t}-1)}\,,\\
			\label{m2}
			\mu_2^{\vb{p}}(t)&=&-\frac{p_1+p_3}{p_1+p_3+p_2\, e^{2t}}=-\frac{1-p_2}{1\,+\,p_2(e^{2t}-1)}\,,\\
			\label{m3}
			\mu_3^{\vb{p}}(t)&=&-\frac{p_1+p_2}{p_1+p_2+p_3\, e^{2t}}=-\frac{1-p_3}{1\,+\,p_3(e^{2t}-1)}\,.
		\end{eqnarray}
	\end{subequations}

	\subsection{One-Qubit Divisibility Diagram}\label{1qubit_dd}
	Before considering the tensor products of pairs of the above maps \eqref{mixdephasing}, we briefly review some relevant properties for the one-qubit case, which was studied in detail in~\cite{Megier_etal}.
	\begin{enumerate}[label=(\textit{\roman*}),leftmargin=2.3em,labelsep=1.mm]
		
		\item \textit{{P-divisibility.}
		} \label{prop:pd} As one can easily check from \eqref{mixrates},
		\begin{equation}\label{mixratesPd}
			\gamma_1^{\vb{p}}(t)+\gamma_2^{\vb{p}}(t)=-2\,\mu_3^{\vb{p}}(t)\ge 0\,,
		\end{equation}
		as well as for cyclic permutations of the indices. Thus (see Example \ref{ex:Paulimaps}), $\Phi_t^{\vb{p}}$ is P-divisible for all  $\vb{p}$.
		
		\item \textit{{Eternal and quasi-eternal non-Markovianity.}} 	\label{prop:eqenm} From \eqref{mixrates}, one has  \begin{equation}
			\gamma_k^{\vb{p}}(0)=2\, p_k \ge0, \qquad k=1,2,3\,.
		\end{equation}
		
		If $p_k=0$, then
		\begin{equation}
			\gamma_k^{\vb{p}}(t)<0\, \quad \forall \, t>0\,;
		\end{equation}
		for instance, letting $p_3=0\implies p_1+p_2=1$ and $t>0$,
		$$
		\gamma_3^{\vb{p}}(t)=\frac{p_2}{p_2+e^{2t}p_1}+\frac{p_1}{p_1+e^{2t}p_2}-1<p_2+p_1-1=0\,,
		$$
		while $\gamma_i^{\vb{p}}(t) > 0,\,t>0,\, i =1,2$,  as a consequence of {Property} 
		(\emph{i}).
		Such a dynamics is clearly not CP-divisible and is usually labeled as \emph{{eternally non-Markovian}} (ENM). For instance, choosing $p_1=p_2=1/2$, $p_3=0$, we end up with the ENM evolution first introduced in~\cite{HallCanonical}, with $\gamma_1(t)=\gamma_2(t)=1$ and $\gamma_3(t)=-\tanh(t)$. Similarly, if $p_1 \,p_2 \, p_3 >0$ and
		\begin{equation}
			\exists \,t^*>0 : \gamma_k^{\vb{p}}(t^*)<0  \implies \gamma_k^{\vb{p}}(t)<0  \quad \forall \,t > t^* \,,
		\end{equation}
		
		This is checked explicitly in Appendix~\ref{theappendix}. It also follows that, for a given $\vb{p}$, at most one rate $\gamma_k^{\vb{p}}(t)$ can become negative: indeed, if $\gamma_k^{\vb{p}}(t)<0$ for $t>t_k^*$ and $\gamma_j^{\vb{p}}(t)<0$ for $t>t_j^*$, $\gamma_k^{\vb{p}}(t)+\gamma_j^{\vb{p}}(t)<0$ for $t>\max\{t_k^*,t_j^*\}$, violating P-divisibility condition~\eqref{mixratesPd}.

		\item \textit{{One-qubit divisibility diagram}.} \label{prop:ddiagram1}  The above properties allow one to characterize the one-qubit ``divisibility diagram''~\cite{divdiagram2015} of the mixtures $\Phi_t^{\vb{p}}$ in terms of the parameters~$\vb{p}$. 
		Let us denote by $\mathcal{P}$ the set of weights $\{\vb{p}: p_k\ge 0, \sum_k p_k=1\}$. As remarked in {point} 
		(\emph{i}) 
		above, each $\vb{p}\in\mathcal{P}$ corresponds to a P-divisible map $\Phi_t^{\vb{p}}$. Therefore, the set of divisible maps is the union of the disjoint subsets 
		\begin{equation}
			\mathcal{CP}=\{\vb{p}: \Phi_t^{\vb{p}}\textrm{ is CP-divisible}\} \quad \hbox{and} \quad\mathcal{P}\setminus \mathcal{CP} \,,
		\end{equation}
		namely, of the subsets of weights corresponding to  CP-divisible maps, respectively, and to P-divisible but not CP-divisible maps $\Phi_t^{\vb{p}}$, respectively.
		The subset $\mathcal{CP}$ can be identified by means of {Property} 
		(\emph{ii}); indeed, if $\Phi_t^{\vb{p}}$ is not CP-divisible, 
		there must exist exactly one $k\in\{1,2,3\}$ such that ${\gamma_k^{\vb{p}}(t\to\infty)<0}$. On the other hand,
		$\Phi_t^{\vb{p}}\in\mathcal{CP}$ if and only if it has positive rates, so that:
		\begin{equation}
			\vb{p}\in\mathcal{CP} \iff \gamma_k^{\vb{p}}(t\to\infty)\ge0 \qquad   k=1,2,3\,.
		\end{equation}
		
		Assuming $p_1 \,p_2 \, p_3 >0$, the asymptotic behavior of the rates for $t\to \infty$ is as follows: 
		\begin{subequations}
			\begin{align}
				\gamma_1^{\vb{p}}(t) &\simeq \frac{e^{-2t}}{p_1\, p_2\, \,p_3}\left(-p_2\, p_3 \,(p_2+p_3)+p_1 \,p_3 \,(p_1+p_3)+p_1\, p_2\, (p_1+p_2)\right)\,,
				\label{asympt1}\\
				\gamma_2^{\vb{p}}(t) &\simeq \frac{e^{-2t}}{p_1\, p_2\, p_3}\left(p_2\, p_3 (p_2+p_3)-p_1 \,p_3 \,(p_1+p_3)+p_1\, p_2\, (p_1+p_2)\right)\,,
				\label{asympt2}\\
				\gamma_3^{\vb{p}}(t) &\simeq \frac{e^{-2t}}{p_1\, p_2\, p_3}\left(p_2\, p_3\, (p_2+p_3)+p_1\, p_3 \,(p_1+p_3)-p_1 \,p_2\, (p_1+p_2)\right)\ ,
				\label{asympt3}
			\end{align}
		\end{subequations}
		makes the region $\mathcal{CP}$ identified using the following inequalities
		\begin{subequations}\label{CPdconditions}
			\begin{align}
				p_1 \,p_2 \,(p_1+p_2) +p_2^2-p_1^2+p_1-p_2 \ge 0\,,
				\label{reg1}\\
				p_2 \,p_1 \,(p_1+p_2) +p_1^2-p_2^2+p_2-p_1 \ge 0\,,
				\label{reg2}\\
				(1+p_1 \,p_2) \,(p_1+p_2) - p_1^2-p_2^2-4 p_1\, p_2\ge0 \,,
				\label{reg3}
			\end{align}
		\end{subequations}
		where $p_3=1-p_1-p_2$ was used, making it possible to represent $\mathcal{CP}$ as a two-dimensional, triangular-like region in the $(p_1,p_2)$ plane, as in Figure \ref{fig:regions}. Conversely, maps which are only P-divisible are identified with points $\vb{p}\in\mathcal{P}\setminus\mathcal{CP}$ that violate one and only one of the inequalities~\eqref{CPdconditions}. Such maps are all examples of so-called \textit{{weakly}} non-Markovian evolutions: in fact, as stated in Proposition~\ref{recall2}, they do not display BFI. The situation is remarkably different when taking the tensor product of two such maps: in such a case, parameter regions with associated BFI appear.
	\end{enumerate}
	
	\begin{figure}[t]
			\centering
		\includegraphics[height=0.4\linewidth]{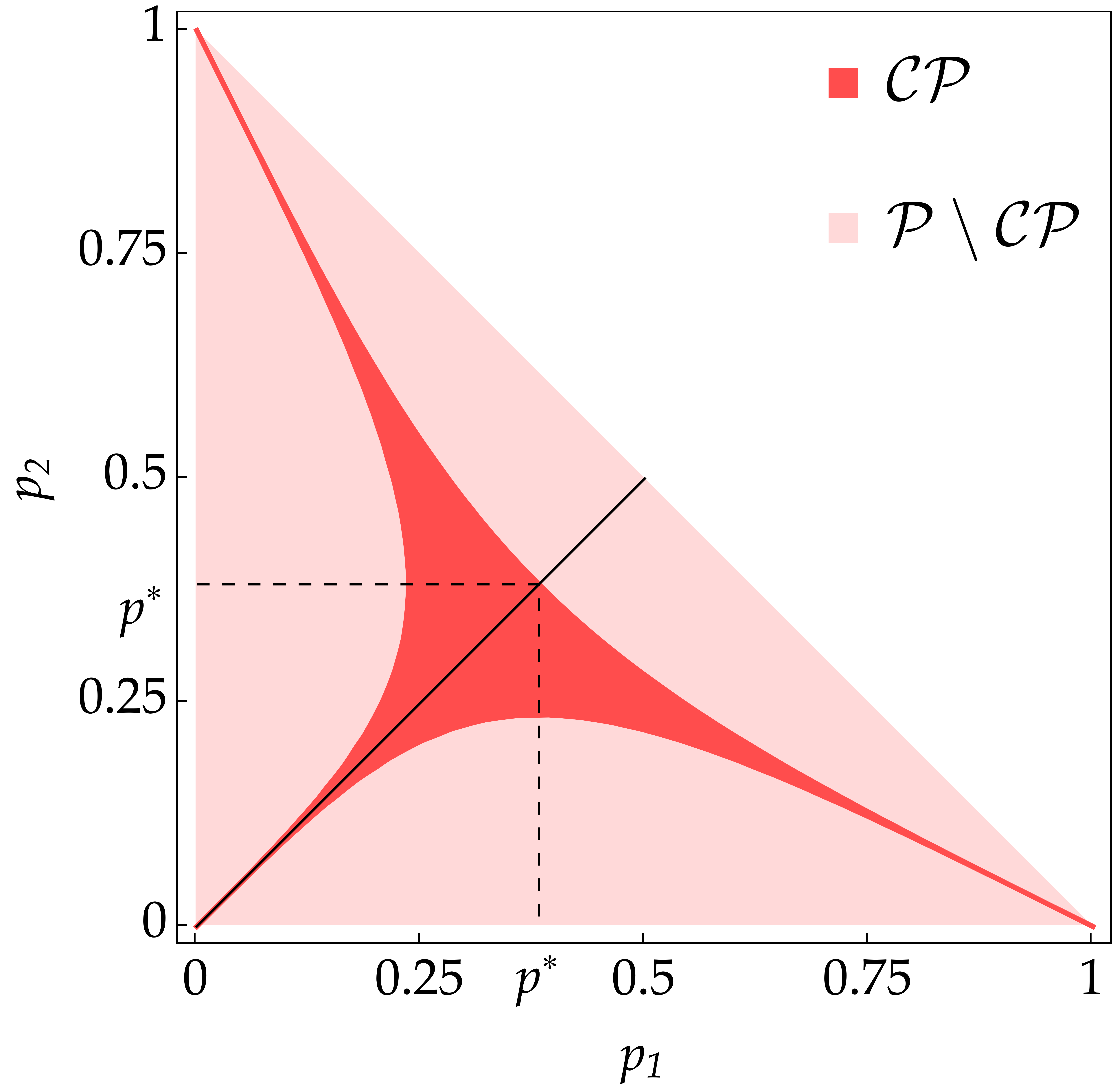}
		\caption{{One-qubit divisibility diagram. For each point $(p_{1},p_2)$, $p_3=1-p_1-p_2$. Region $\mathcal{CP}$ includes CP-divisible maps ${(\gamma_k^{\vb{p}}(t) \ge 0, \, \forall \, t\ge0)}$, while $\mathcal{P}\setminus\mathcal{CP}$ corresponds to P-divisible but not CP-divisible maps ($\exists \, k, \,t^*\ge 0 : \gamma_k^{\vb{p}}(t^*) < 0$, $\forall\, t >t^*$). The highlighted point $(p^*,p^*)$, ${p^*=	\frac{1}{2}(3-\sqrt{5})\approx0.38}$, marks the boundary between $\mathcal{CP}$ and $\mathcal{P}\setminus\mathcal{CP}$ along the line $p_1=p_2$.}}
		\label{fig:regions}
	\end{figure}
	
	\subsection{Divisibility Diagram of Tensor Product Dynamics}
	\label{2qubit_dd}
	
	We now investigate the structure of the divisibility diagram for tensor products ${\Phi_t^{\vb{p}}\otimes \Phi_t^{\vb{q}}}$. It will become evident that there exist regions $\mathcal{CP}_2$ of CP-divisible maps and $\mathcal{N}_2$ of non-P-divisible maps that exhibit SBFI. However, the two-qubit parameter space will also be complemented by a non-trivial region $(\mathcal{P}\setminus\mathcal{CP})_2$ of only P-divisible maps that do not show SBFI, as we shall prove in the following.
	\begin{itemize}
		\item \textit{{CP-divisibility.}} Since the tensor product of CP-divisible maps is still CP-divisible,
		$$
		\vb{p},\vb{q}\in \mathcal{CP}\Longrightarrow (\vb{p},\vb{q})\in \mathcal{CP}_2 \,.
		$$
		On the other hand, if $(\vb{p},\vb{q})\in \mathcal{CP}_2$,
		consider arbitrary rank-1 projectors $Q, P, R \in\Md{2}\otimes\Md{2}$; since $\Phi_{t,s}^{\vb{p}}\otimes\Phi_{t,s}^{\vb{q}}$ is a CPTP map for all $t \ge s \ge 0$,
		$$
		0\le \Tr\left((\mathds{1}_4^{1,3}\otimes Q^{2,4}) \,\Phi_{t,s}^{\vb{p}}\otimes\Phi_{t,s}^{\vb{q}}\otimes \mathrm{id}_4\left[R^{1,3}\otimes P^{2,4}\right]\right)=\Tr\left(Q\left(\Phi_{t,s}^{\vb{q}}\otimes\mathrm{id}_2\right)[P]\right)\,,
		$$
		(superscripts denote only to which qubits the $4\times 4$ matrices refer) showing that $\Phi_{t,s}^{\vb{q}}$ has to be completely positive. The same holds for $\Phi_{t,s}^{\vb{p}}$,
		so that
		
		$$
		\vb{p},\vb{q}\in \mathcal{CP}\Longleftarrow (\vb{p},\vb{q})\in \mathcal{CP}_2 \,.
		$$
		\item 
		\textit{{Lack of P-divisibility.}}
		In view of Proposition \ref{benattischrusfilippov_prop},
		$$
		\vb{p}\in \mathcal{P}\setminus\mathcal{CP}\Longrightarrow (\vb{p},\vb{p})\in \mathcal{N}_2 \,.
		$$
		Also, as we have seen, all single-dynamical maps labeled by $\vb{p}$ are P-divisible; thus, if $\Phi_t^{\vb{p}}\otimes \Phi_t^{\vb{p}}$ is not P-divisible, $\Phi_t^{\vb{p}}$ cannot be CP-divisible, and then
		$$
		\vb{p}\in \mathcal{P}\setminus\mathcal{CP}\Longleftarrow (\vb{p},\vb{p})\in \mathcal{N}_2 \,.
		$$
		Furthermore, for a sufficiently  small but not vanishing  perturbation $\delta \vb{p}$, $\norm{\delta \vb{p}}\ll 1$, one also has
		\begin{equation}
			\vb{p}\in \mathcal{P}\setminus\mathcal{CP} \Longrightarrow (\vb{p},\vb{p}+\delta \vb{p})\in\mathcal{N}_2 \,.
		\end{equation}
		Indeed, a small perturbation cannot in general restore the P-divisibility of the tensor product (see Remark \ref{remarkD1}). Moreover, we can also assert that
		\begin{equation}\label{pqInP}
			\vb{p},\vb{q}\in \mathcal{P}\setminus\mathcal{CP}\implies(\vb{p},\vb{q})\in\mathcal{N}_2 \,.
		\end{equation}
		Indeed, if both $\vb{p}$ and $\vb{q}$ are in $\mathcal{P}\setminus\mathcal{CP}$, then two rates, say, $\gamma_i^{\vb{p}}(t)$ and $\gamma_j^{\vb{q}}(t)$,
		would become negative and stay negative asymptotically:  $\gamma_i^{\vb{p}}(\infty)$, $\gamma_j^{\vb{q}}(\infty)<0$, by {Property}
		~(\emph{ii}). In particular, so will their sum, ${\gamma_i^{\vb{p}}(\infty)+ \gamma_j^{\vb{q}}(\infty)<0}$. Thus,  $\Phi_t^{\vb{p}}\otimes \Phi_t^{\vb{q}}$ cannot be P-divisible, due to Proposition~\ref{prop:qubit2tensordiff}.
		As a consequence, all  $\vb{p},\vb{q} \in \mathcal{P}\setminus\mathcal{CP}$ will give rise to a tensor product map $\Phi_t^{\vb{p}}\otimes\Phi_t^{\vb{q}}$ displaying SBFI.
		

		\item \textit{{P-divisibility without CP-divisibility.}} 
		We now prove the existence of a non-trivial region $(\mathcal{P}\setminus\mathcal{CP})_2$ of maps  $\Phi_t^{\vb{p}}\otimes \Phi_t^{\vb{q}}$
		which are  P-divisible, without $\Phi_t^{\vb{p}}$, $\Phi_t^{\vb{q}}$ being both CP-divisible.
		From \eqref{pqInP}, such a region can only consist of points $(\vb{p},\vb{q})$ with $\vb{p}\in\mathcal{P}\setminus\mathcal{CP}$ and $\vb{q}\in \mathcal{CP}$ or $\vb{p}\in\mathcal{CP}$ and $\vb{q}\in \mathcal{P}\setminus\mathcal{CP}$.
		In order to show that the region $(\mathcal{P}\setminus\mathcal{CP})_2$ is not empty, we restrict to $\vb{p}$ of the form
		$(p,p,1-2\,p)$, $0 \le p \le \frac{1}{2}$. These are the points lying along the line $p_1=p_2$ in Figure \ref{fig:regions}. With this choice, the Pauli rates become
		\begin{subequations}
			\begin{align}
				\gamma_1^{\vb{p}}(t)&=\gamma_2^{\vb{p}}(t)=\frac{2 p}{(1-2 p) e^{2 t}+2 p} \,,\\
				\gamma_3^{\vb{p}}(t)&=2\,\frac{1-p}{1+ p \left(e^{2 t}-1\right)}-\gamma_1^{\vb{p}}(t) \,.
			\end{align}
		\end{subequations}
		Thus, only $\gamma_3^{\vb{p}}(t)$ might become negative. If $\gamma_3^{\vb{p}}(t)\geq 0$, then  the inequality~\eqref{reg3} must be satisfied. For $p_1=p_2=p$, it reads
		$$
		2\, p \,  (p^2 - 3 p + 1) \ge 0\ .
		$$
		Then, $\vb{p}=(p,p,1-2p)\in \mathcal{CP}$ if and only if $0\le p\le p^*$, with $p^*=\frac{1}{2} \left(3-\sqrt{5}\right)\approx0.38$. Conversely, for $p^*< p\le \frac{1}{2}$, $\gamma_3^{\vb{p}}(t)$ is not a positive function of time and $\Phi_t^{\vb{p}}$ is not CP-divisible.
		
		\noindent
		Setting  $p^*<p\le\frac{1}{2}$ so that $\vb{p}=(p,p,1-2p)\in \mathcal{P}\setminus\mathcal{CP}$, we then look for parameters $\vb{q}\in\mathcal{CP}$ such that  $\Phi_t^{\vb{p}}\otimes \Phi_t^{\vb{q}}$ is P-divisible. From Proposition \ref{prop:qubit2tensordiff}, necessary and sufficient conditions are
		\begin{equation}\label{conditionpq}
			\gamma_3^{\vb{p}}(t)+\gamma_k^{\vb{q}}(t) \ge 0\,, \qquad \forall \,t\ge 0 \,,\quad  k=1,2,3\,.
		\end{equation}
		First, let us look for a $\vb{q}$ on the line $q_1 = q_2 \equiv q$, with $0\le q \le p^*$. Then, a simpler condition can be inferred, namely,
		\begin{equation}\label{conditionpq_asympt}
			\gamma_3^{\vb{p}}(\infty)+\gamma_k^{\vb{q}}(\infty) \ge 0\,\quad k=1,2,3\,.
		\end{equation}
		Indeed, $\gamma_3^{\vb{q}}(t)+\gamma_k^{\vb{q}}(t)$ can have at most one zero for $t\ge 0$ (see Appendix \ref{theappendix}). In turn, the following holds:
		\begin{equation}
			\exists\,  t^*\ge 0 : \gamma_3^{\vb{p}}(t^*)+\gamma_k^{\vb{q}}(t^*) <0 \implies 	\gamma_3^{\vb{p}}(t)+\gamma_k^{\vb{q}}(t) <0 \quad  \forall \,t>t^*\,,
		\end{equation}
		so that one can look only at the asymptotic behavior. Then, \eqref{conditionpq_asympt} enforces the following conditions on the pair  $(p,q)$:
		\begin{subequations}\label{bisectorconditions_alt}
			\begin{align}
				q (1 - 2 q) + p (1 - 6 q + 7 q^2) -
				p^2 (2 - 7 q + 4 q^2) \ge 0 \,,\\
				1-2 q- p (3 - 7 q)+p^2 (1 - 4 q)  \ge 0 \,.
			\end{align}
		\end{subequations}
		Hence, if $\vb{p}=(p,p,1-2p)\in\mathcal{P}\setminus\mathcal{CP}$ and $\vb{q}=(q,q,1-2q)\in\mathcal{CP}$  satisfy (46a) and (46b), then $(\vb{p},\vb{q})\in (\mathcal{P}\setminus\mathcal{CP})_2$; otherwise, they belong to $\mathcal{N}_2$. Regions defined by  (46a) and (46b) are displayed in Figure~\ref{fig:boundariespq}: it shows  that, for values of $p$ increasingly greater than $p^{*}$, the following are true: 
		\begin{enumerate}
			\item The largest value $0\le q\le p^*$ such that $\Phi_t^{\vb{p}}\otimes\Phi_t^{\vb{q}}$ is P-divisible
			is increasingly smaller than $p^*$ (and $p$). Thus, what also increases is
			the minimal departure \mbox{$\delta \vb{q}=\norm{\vb{q}-\vb{p}}$ of $\vb{q}$}
			from $\vb{p}$ in the second party such that $\Phi_t^{\vb{p}}\otimes\Phi_t^{\vb{p}+\delta\vb{q}}$ is \mbox{P-divisible;}
			\item The subset of $0\le q\le p^*$ such that $\Phi_t^{\vb{p}}\otimes\Phi_t^{\vb{q}}$ is P-divisible progressively reduces, up to $p=\sqrt{2}-1\approx0.414$, for which only ${q}=\frac{1}{3}$ can make $\Phi_t^{\vb{p}}\otimes \Phi_t^{\vb{q}}$ P-divisible (that is, the only $\vb{q}$ making the map P-divisible is the centroid of $\mathcal{CP}$). \end{enumerate}

		\begin{figure}[t]
			\centering
			\includegraphics[width=0.4\linewidth]{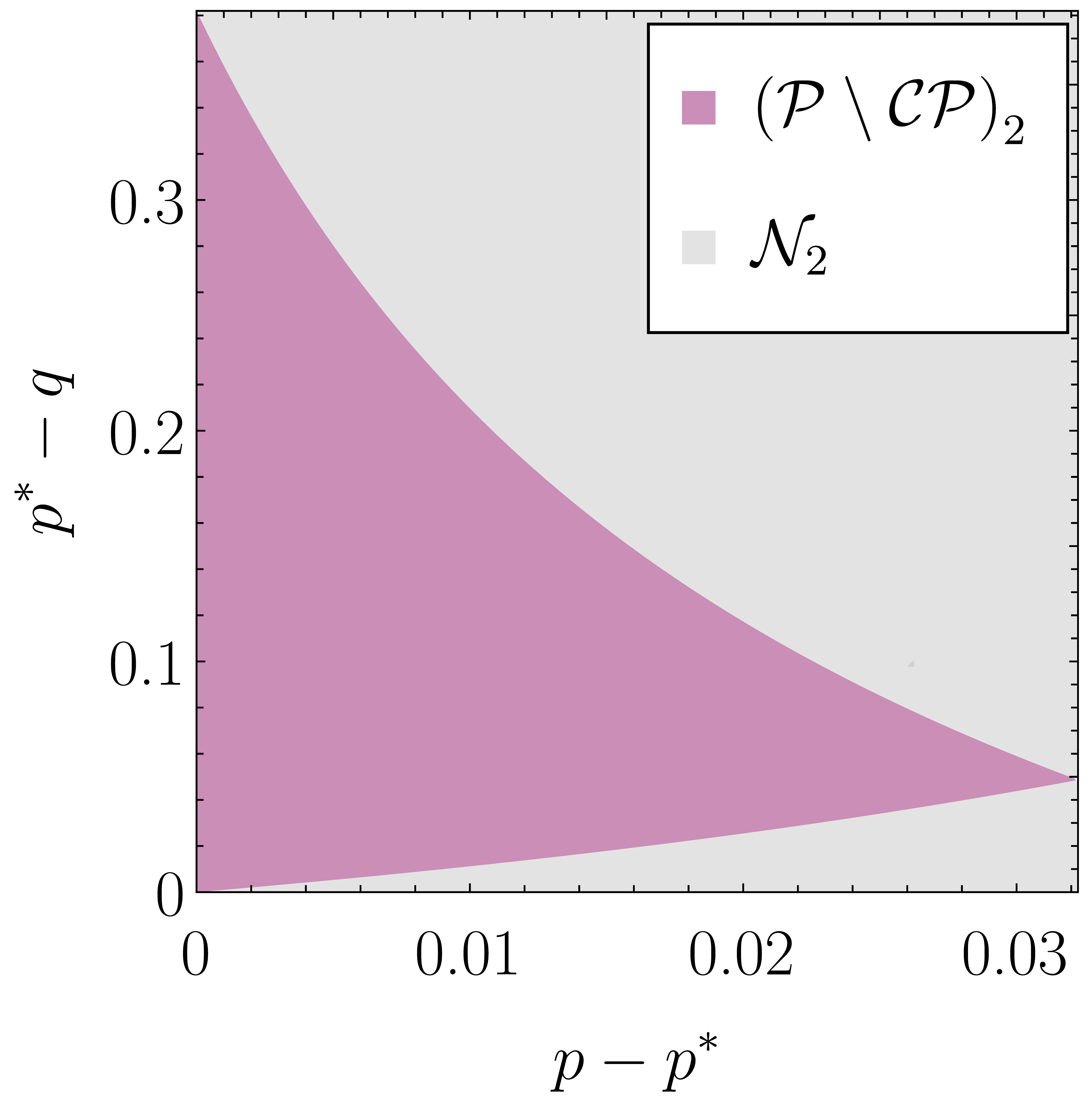}
			\caption{{Parameter set of maps $\Phi_t^{\vb{p}}\otimes \Phi_t^{\vb{q}}$, with $\vb{p}=(p,p,1-2p)$, $\vb{q}=(q,q,1-2q)$ along the bisector of Figure~\ref{fig:regions},
					with $p\in(p^*,\frac{1}{2}]$, $q\in[0,p^*]$ and $p^*=\frac{1}{2}(3-\sqrt{5})\approx 0.38$. The region of $(p,q)$ defined by the inequalities in (46a) and (46b) is shown in purple. For points inside this region, $\Phi_t^{\vb{p}}\otimes \Phi_t^{\vb{q}}$ is P-divisible, while outside it is not P-divisible and displays SBFI. Notice that for increasing $p>p^*$, the minimal distance $\delta q:=p-q$ needed to make $\Phi_t^{\vb{p}}\otimes \Phi_t^{\vb{p}}$ enter the purple region by varying it into $\Phi_t^{\vb{p}}\otimes \Phi_t^{\vb{q}}$ with
					$\vb{p}=(p,p,1-2p)$ and $\vb{q}=(p+\delta q,p+\delta q, 1-2p-2\delta q)$ increases.  The interval of ``good'' $q$ that permits us to restore P-divisibility also reduces with increasing $p$, up until $p=\sqrt{2}-1\approx0.414$, which corresponds to $p-p^*\approx 0.03$, where it only consists of  $q=\frac{1}{3}$. The corresponding point $\vb{q}=(\frac{1}{3},\frac{1}{3},\frac{1}{3})$ lies on the centroid of $\mathcal{CP}$.}}
			\label{fig:boundariespq}
		\end{figure}


		To have more insight about the structure of the region $(\mathcal{P}\setminus\mathcal{CP})_2$, let us now allow $q_1 \ne q_2$, while keeping $\vb{p} \in(\mathcal{P}\setminus\mathcal{CP})$ fixed. Define then a region
		\begin{equation}
			\left(\mathcal{P}\setminus\mathcal{CP}\right)_{2,t}^{\vb{p}}=\{\vb{q} \in \mathcal{CP}: \gamma_3^{\vb{p}}(t)+\gamma_k^{\vb{q}}(t)\ge 0,\, k=1,2,3\} \, .
		\end{equation}
		At $t=0$, $\gamma_3^{\vb{p}}(0)+ \gamma_k^{\vb{q}}(0)=2(p_3+q_k) \ge0$, so $(\mathcal{P}\setminus\mathcal{CP})_{2,t=0}^{\vb{p}}\equiv\mathcal{CP}$.
		In Figure~\ref{fig:P2}a, $(\mathcal{P}\setminus\mathcal{CP})^{\vb{p}}_{2,t}$ is depicted for  $\tilde{\vb{p}}=(0.4,0.4,0.2)$; it shrinks for increasing $t$, but it survives the limit $t \to \infty$. Points $\vb{q}$ within the asymptotic region $(\mathcal{P}\setminus\mathcal{CP})_{2,\infty}^{\tilde{\vb{p}}}\subseteq\mathcal{CP}$ will give rise to maps $\Phi_{t}^{\tilde{\vb{p}}}\otimes \Phi_{t}^{\vb{q}}$, which are P-divisible. Numerical evidence indeed shows that~\eqref{conditionpq_asympt} is still  necessary and sufficient for having the P-divisibility of the tensor product, since the sum $\gamma_3^{\vb{p}}(t)+\gamma_k^{\vb{q}}(t)$ will have at most one zero if $\vb{q}\in \mathcal{CP}$ ({{this can be checked} 
			by studying the numerator of $\gamma_3^{\vb{p}}(t)+\gamma_k^{\vb{q}}(t)$, which will be a polynomial in $e^{2t}$, and studying the pattern of coefficients to apply the Descartes rule of signs. This is carried out explicitly in Appendix~\ref{theappendix} for the case of $\vb{p},\vb{q}$ taken along the lines $p_1=p_2$ and $q_1=q_2$}).

		In addition, for $\vb{p}$ along the line $p_1=p_2$ and progressively further away from $(p^*,p^*,1-2p^*)$, the size of $(\mathcal{P}\setminus\mathcal{CP})_{2,\infty}^{\vb{p}}\subseteq\mathcal{CP}$ will reduce, as already noted for the $q_1=q_2$ case. 
		The regions $(\mathcal{P}\setminus\mathcal{CP})_{2,\infty}^{\vb{p}}$ for $\tilde{\vb{p}}=(0.4,0.4,0.2)$ and $\overline{\vb{p}}=(0.41,0.41,0.18)$ are compared in Figure~\ref{fig:P2}b.

	\end{itemize}

	\begin{Remark}
		Having fixed $\vb{p}\in (\mathcal{P}\setminus\mathcal{CP})$, the P-divisibility of $\Phi_t^{\vb{p}}\otimes \Phi_t^{\vb{q}}$ cannot be restored by making arbitrarily small perturbations to $\vb{p}$, as discussed in Remark \ref{remarkD1}. This is evident also from \mbox{Figures~\ref{fig:boundariespq} and \ref{fig:P2}b}, since it is necessary to perform a large enough variation in the second party to enter region $(\mathcal{P}\setminus\mathcal{CP})_{2,\infty}^{\vb{p}}$, restore P-divisibility and, consequently, eliminate SBFI.
	\end{Remark}\vspace{-12pt}
	
	\begin{figure}[t]
		\centering
		\captionsetup[subfloat]{justification=centering}
		\subfloat[\label{fig:P2times}]{\includegraphics[height=0.45\linewidth]{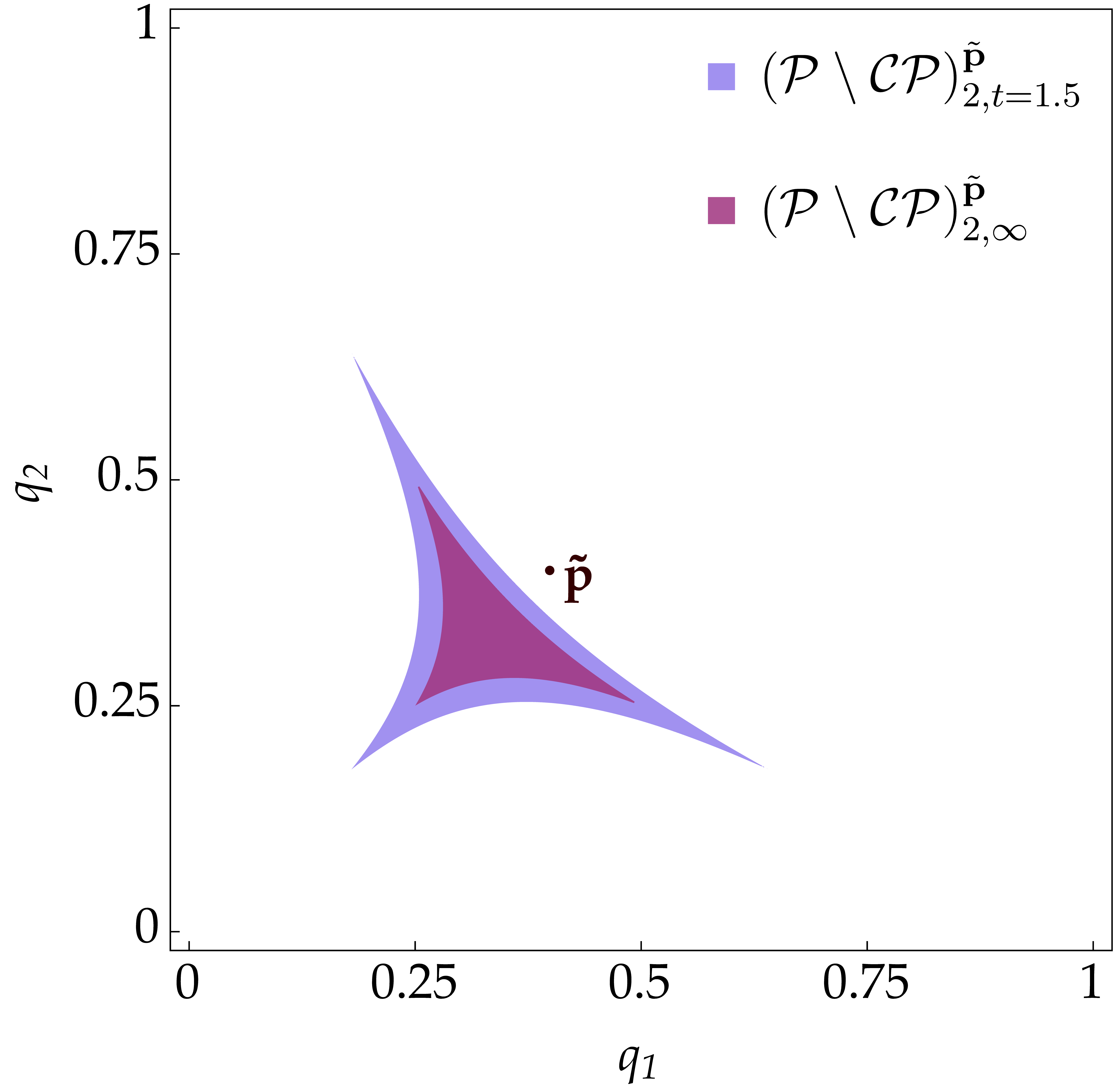}}%
		\vspace{-12pt}
		\subfloat[\label{fig:P2regions}]{\includegraphics[height=0.45\linewidth]{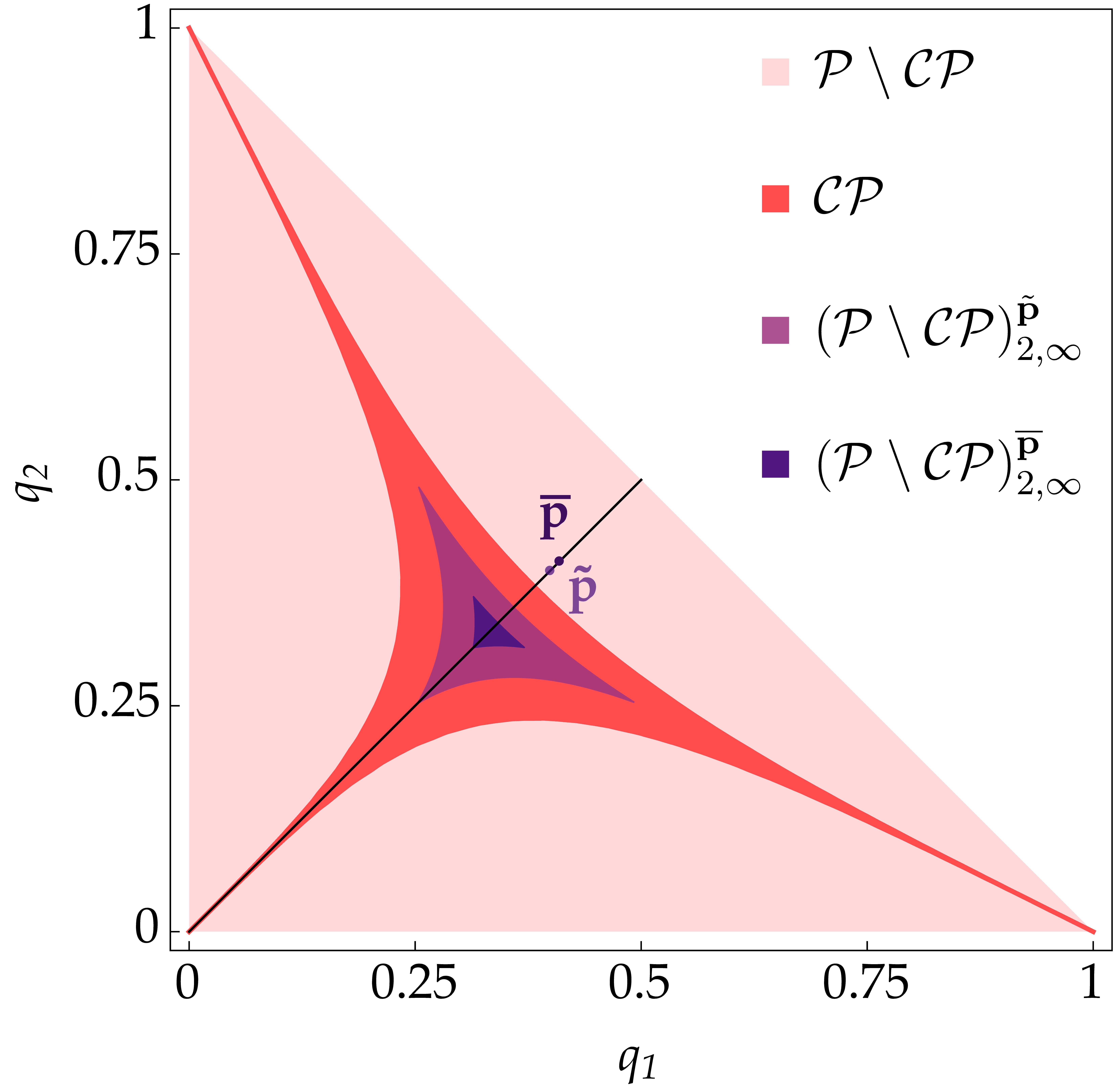}}%

		\caption{In (\textbf{a}), for fixed $\tilde{\vb{p}}=(0.4,0.4,0.2)$ in $(\mathcal{P}\setminus\mathcal{CP})$, the region $(\mathcal{P}\setminus\mathcal{CP})_{2,t}^{\tilde{\vb{p}}}=\{\vb{q} \in \mathcal{CP}: \gamma_3^{\tilde{\vb{p}}}(t)+\gamma_k^{\vb{q}}(t)\ge 0,\,k=1,2,3\}$ is displayed for $t=1.5$ and for $t\to \infty$. In (\textbf{b}), the same region $(\mathcal{P}\setminus\mathcal{CP})_{2,\infty}^{\tilde{\vb{p}}}$ is displayed in purple within $\mathcal{CP}$; it contains points $\vb{q}$ such that $\Phi_t^{\tilde{\vb{p}}}\otimes \Phi_t^{\vb{q}}$ is P-divisible. Notice that it cannot be reached from $\tilde{\vb{p}}$ with an arbitrarily small perturbation $\tilde{\vb{p}}+\delta \vb{p}$: a large enough  variation is required to enter it, restore P-divisibility and eliminate SBFI. The same region is also displayed in dark purple for $\overline{\vb{p}}=(0.41,0.41,0.18)$, illustrating that as $\vb{p}=(p,p,1-2p)$ moves along the $p_1=p_2$ line with $p$ increasingly larger than $p^*$, the size of the region $(\mathcal{P}\setminus\mathcal{CP})_{2,\infty}^{\vb{p}}$ decreases.}
		\label{fig:P2}
	\end{figure}

	\section{Conclusions}
	
	This paper deals with a peculiar quantum phenomenon arising within the scenario of open multi-partite quantum systems whose joint dynamics is the tensor product of dynamical maps with time-dependent generators. It might indeed happen that memory effects in single dynamics do not support BFI from environment to system, while the tensor product dynamics do. Such a superactivation phenomenon is a purely quantum effect resulting from non-classical correlations among the parties, as can be shown by means of a suitable collisional model~\cite{benattinichele}. Its physical appeal relies upon the fact that memory effects have often been proven resourceful in a variety of quantum technological tasks. In this respect, further investigations are needed for potential applications of SBFI. \par
	The absence or presence of BFI in connection to an open quantum dynamics $\Lambda_t$ is determined using its divisibility properties, namely, by whether or not the intertwining maps $\Lambda_{t,s}$ such that $\Lambda_t=\Lambda_{t,s}\circ\Lambda_s$ are positive or not.
	We investigated such properties for the general tensor products of completely positive dynamical maps $\Lambda_t^{(1)}\otimes\Lambda_t^{(2)}$, with the aim of seeking conditions for P-divisibility, without the single-party maps $\Lambda_t^{(1,2)}$ being CP-divisible, namely, without these maps having completely positive intertwiners.
	
	For the case of two P-divisible Pauli maps acting on qubits, particularly simple necessary and sufficient conditions for the P-divisibility of their tensor products have been given in terms of the mutual sums of the master equation rates. 
	Moreover, examples have been provided in which the P-divisibility of the tensor product could be achieved without one or even both maps being CP-divisible. The interest of these results relates to the fact that a lack of P-divisibility (along with invertibility) implies the emergence of BFI
	through revivals of the trace distance between time-evolving quantum states. If $\Lambda_t$ is P- but not CP-divisible, it is known that BFI does not occur for $\Lambda_t$, but it does occur for the second-order tensor product $\Lambda_t\otimes\Lambda_t$, describing two parties evolving in identical, dissipative environments. Such an intriguing phenomenon has been called the Superactivation of BFI.
	
	The  results presented in the manuscript imply that SBFI can be switched off by performing a suitable variation of one of the two dissipative evolutions. Concretely, by suitably changing the parameters of the generator of one of the parties, for instance, by acting on the microscopic origin of open quantum dynamics, one can stop information from being injected into the coupled open
	quantum system from the environment. 
	Abundant phenomenology relative to the SBFI effect has been provided by means of mixtures of dephasing qubit maps, in particular by means of two-qubit ``divisibility diagrams'' that show regions of CP-divisible maps, of non-P-divisible maps as well as of only P-divisible maps. The emerging picture is such that, when dealing with P-divisible but not CP-divisible dynamics $\Lambda_t$, obtaining a P-divisible tensor product $\Lambda_t\otimes\widetilde{\Lambda}_t$ is not achievable via simple tensorization with a slight perturbation $\widetilde{\Lambda}_t$ of $\Lambda_t$ itself; instead, a sufficiently large variation is required to eliminate SBFI.
	
	
	
	
	\vspace{6pt}
	
	
	
	

	F.B. and G.N. acknowledge financial support from PNRR MUR project PE0000023-NQSTI.

	
	
		
	

	\appendix

	\section*{Appendix}
	\renewcommand{\thesubsection}{\Alph{subsection}}
	\subsection{Proofs of Propositions \ref{prop:sufficientPD} and  \ref{prop:qubit2tensordiff}}\label{app:APPproofs}
	
	\begin{proof}[{Proof} of Proposition~\ref{prop:sufficientPD}] 
		Let $\braket{\phi}{\psi}=0$ and consider the quantity
		
		$$
		\mathcal{G}_t(\phi,\psi):=\bra{\phi}\mathcal{L}_t^{(1)}\otimes \mathrm{id}_d + \mathrm{id}_d \otimes \mathcal{L}_t^{(2)}\left[\ket{\psi}\bra{\psi}\right]\ket{\phi}\ .
		$$
		
		{Let} $\Phi=[\phi_{ab}]$, $\Psi=[\psi_{ab}]$
		be the matrices whose entries are the vectors' components with respect to a fixed orthonormal basis $\{\ket{a}\otimes\ket{b}\}_{a b}$, $\Tr(\Phi\daga{\Psi})=0$, so that:
		
		\begin{equation}
			\mathcal{G}_t(\phi,\psi)=\sum_{k=1}^{d^2-1} \gamma_k^{(1)}(t) \abs{\Tr(F_k^{(1)}(t) \, \Phi \Psi^{\dagger})}^2+\sum_{k=1}^{d^2-1} \gamma_k^{(2)}(t) \abs{\Tr(F_k^{(2)}(t) \, (\Psi^{\dagger}\Phi )^{T})}^2.
			\label{eq6}
		\end{equation}
		
		\textls[-25]{{Since} ${F_k^{(\alpha)}(t)=\left(F_k^{(\alpha)}(t)\right)^\dagger}$, $\alpha=1,2$, are assumed to be a Hilbert--Schmidt orthonormal basis for the traceless $d\times d$ matrices $W$ for all $t\ge0$, one writes $W=\sum_{j=1}^{d^2-1}{\rm Tr}\Big(WF^{(\alpha)}_j(t)\Big)\,F^{(\alpha)}_j(t)$} so that
		\begin{equation*}
			\begin{split}
				\sum_{k=1}^{d^2-1}\left(\Tr(F_k^{(1)}(t) \, \Phi \Psi^{\dagger})\right)^2&=\Tr(\Phi \daga{\Psi}\Phi \daga{\Psi})=\Tr((\daga{\Psi}\Phi)^T (\daga{\Psi}\Phi)^T)\\&=\sum_{k=1}^{d^2-1}\left(\Tr(F_k^{(2)}(t) \, (\daga{\Psi}\Phi)^T)\right)^2,
			\end{split}
		\end{equation*}
		from which, isolating one term and applying the triangle inequality, one obtains
		\begin{equation*}
			\left|\Tr(F_j^{(2)} \, (\Psi^{\dagger}\Phi )^{T})\right|^2\le \sum_{k=1}^{d^2-1}\left|\Tr(F_k^{(1)} \, \Phi \Psi^{\dagger})\right|^2 +\sum_{k \ne j}\left|\Tr(F_k^{(2)} \, (\Psi^{\dagger}\Phi )^{T}) \right|^2.
		\end{equation*}
		
		{Set} $J_-:=\{t\ge 0 : \gamma_j^{(2)}(t) <0\}\subseteq \mathbb{R}^+$.
		For all $t \in J_-$, we can thus bound~\eqref{eq6} as~follows:
		\begin{eqnarray*}
			\mathcal{G}_t(\phi,\psi)&\ge&\sum_{k=1}^{d^2-1} \left(\gamma_k^{(1)}(t)+\gamma_j^{(2)}(t)\right)\,
			\abs{\Tr(F_k^{(1)} \, \Phi \Psi^{\dagger})}^2\\
			&+&\sum_{k \ne j} \left(\gamma_k^{(2)}(t)+\gamma_j^{(2)}(t)\right)\, \abs{\Tr(F_k^{(2)} \, (\Psi^{\dagger}\Phi )^{T})}^2\,\ge0\,.
		\end{eqnarray*}
		
		{We} can then define $I_-=\{t\ge0 : \gamma_i^{(1)}(t)<0\}$ and repeat the estimation, this time acting on the term proportional to $\gamma_i^{(1)}(t)$. Thus, $\mathcal{G}_t(\phi,\psi)\ge 0$ for all $t \in I_- \cup J_-$ for times $t$ such that $\gamma_i^{(1)}(t)\ge0$ and $\gamma_j^{(2)}(t) \ge 0$, $\mathcal{G}_t(\phi, \psi) \ge 0$ follows trivially from \eqref{eq6}.
		Hence, we prove that $\mathcal{G}_t(\phi, \psi)\ge0$ for all $t \ge 0 $, which is sufficient to guarantee, by means of Lemma~\ref{chp:prop_kossak}, the P-divisibility of $\Lambda_{t}^{(1)}\otimes\Lambda_{t}^{(2)}$. \end{proof}
	
	\begin{proof}[{Proof} of Proposition~\ref{prop:qubit2tensordiff}] 
		For the ``if part'', the conditions in \eqref{pdtensor_condition} together with the necessary and sufficient conditions for the P-divisibility of the one-qubit Pauli maps (see Example \ref{ex:Paulimaps}), namely,
		\begin{equation}
			\gamma_i^{(\alpha)}(t)+\gamma_j^{(\alpha)}(t)\ge0\,, \qquad \alpha=1,2\ ,\quad
			\forall \,t\ge0\, ,\quad \forall i \ne j \,,
		\end{equation}
		fulfil all the requirements of Proposition \ref{prop:sufficientPD}; hence, they are sufficient for the P-divisibility of $\Lambda_t^{(1)}\otimes \Lambda_t^{(2)}$. \par
		For the ``only if'' part, let us first show that the P-divisibility of $\Lambda_{t}^{(1)}\otimes \Lambda_{t}^{(2)}$ implies the P-divisibility of $\Lambda_t^{(1,2)}$. Suppose, then, that $\forall$ $t \ge s \ge 0$,  $\Lambda_{t,s}^{(1)}\otimes \Lambda_{t,s}^{(2)}$ is a positive map, and consider generic rank-1 projectors $P$, $P'$ and $Q$ $\in \Md{2}$; since $\mathds{1}_2\otimes Q\ge0$ and $\Lambda_{t,s}^{(1,2)}$ preserve the trace,
		\begin{align}
			0\le	\Tr\left(\left(\mathds{1}_2\otimes Q\right)\,\Lambda_{t,s}^{(1)}\otimes \Lambda_{t,s}^{(2)}\left[P'\otimes P\right]\right) =
			\Tr(Q\,\Lambda_{t,s}^{(2)}[P]) \,,
		\end{align}
		so $\Lambda_{t,s}^{(2)}$ is a positive map for all $t \ge s \ge 0$ and  $\Lambda_t^{(2)}$ is P-divisible. Analogously, $\Lambda_{t}^{(1)}$ also has to be P-divisible. Therefore, only the necessity of the conditions in \eqref{pdtensor_condition} remains to be demonstrated. Consider the unitary operators
		\begin{equation}
			V^{kl}=\frac{\sigma_k+\sigma_l}{\sqrt{2}}=\left(V^{kl}\right)^\dag=\left(V^{kl}\right)^{-1}\ , \quad k \ne l \in\{1,2,3\}\ ,
		\end{equation}
		\textls[-15]{	such that $V^{kl}\, \sigma_i\, V^{kl}=\sum_{j=1}^3\mathcal{V}_{ij}^{kl}\sigma_j$ with $\mathcal{V}^{kl}$ a $3\times 3$ Hermitian matrix.
			Due to \mbox{Proposition~\ref{prop:necPd}}, $\Lambda_t^{(1)}\otimes \Lambda_t^{(2)}$ is P-divisible $\implies$ $K^{(1)}(t)+\mathcal{V}^{kl}K^{(2)}(t)\mathcal{V}^{kl}\ge 0$. For example, $\mathcal{V}^{23}=\begin{pmatrix}
				-1 & 0 &0 \\
				0 & 0 &1 \\
				0 & 1 &0 \\
			\end{pmatrix}$ yields}
		\begin{eqnarray*}
			0 & \le &	K^{(1)}(t)+\mathcal{V}^{23}  K^{(2)}(t)\mathcal{V}^{23} \\
			& =& \begin{pmatrix}
				\gamma_1^{(1)}(t) &  & \\
				& \gamma_2^{(1)}(t) &\\
				& &\gamma_3^{(1)}(t)  \\
			\end{pmatrix}+
			\begin{pmatrix}
				\gamma_1^{(2)}(t) &  & \\
				& \gamma_3^{(2)}(t) &\\
				& &\gamma_2^{(2)}(t)  \\
			\end{pmatrix}\,,
		\end{eqnarray*}
		thus enforcing $\gamma_1^{(1)}(t) +\gamma_1^{(2)}(t) \ge0, \gamma_2^{(1)}(t) +\gamma_3^{(2)}(t)\ge0, \gamma_3^{(1)}(t) +\gamma_2^{(2)}(t)\ge0$.
		Varying $k,l$, one obtains the complete set of conditions in \eqref{pdtensor_condition}.
	\end{proof}

	\subsection{Properties of the Quasi-ENM Dynamics \texorpdfstring{$\mathbf\Phi_\textbf{\emph{t}}^{\vb{p}}$}{\texttwoinferior}}
	\label{theappendix}
	\begin{enumerate}[label=(\textit{\Roman*}),leftmargin=2.3em,labelsep=1.mm]
		\item \textit{If $p_1 \,p_2\, p_3>0$ and $\gamma_k^{\vb{p}}(t^*)<0$ for some $t^*>0$, $k\in\{1,2,3\}$, then $\gamma_k^{\vb{p}}(t)<0$ $\forall \, t > t^*$.}
		\begin{proof} Suppose that $\gamma_3^{\vb{p}}(t)$ is the rate turning negative at some $t^*>0$. Then, by means of Descartes's sign rule~\cite{Meserve1982}, we show that $\gamma_3^{\vb{p}}(t)$ can have at most one zero, so if it turns negative, it stays negative for all subsequent times.
			Using~\eqref{mixrates} and~\eqref{g-aux}, we~rewrite
			\begin{equation}\label{rewr_3}
				\gamma_3^{\vb{p}}(t)=\frac{N_3^{\vb{p}}(t)}{\prod_{k=1}^3 (1+p_k(e^{2t}-1))},
			\end{equation}
			where
			\begin{align}
				N_3^{\vb{p}}(t)\equiv \alpha_{3,0}({\vb{p}})+\alpha_{3,1}({\vb{p}})e^{2 t}+ \alpha_{3,2}({\vb{p}})e^{4t},
			\end{align}
			with
			\begin{subequations}\label{coeff_alpha}
				\begin{align}
					\alpha_{3,2}({\vb{p}})&= -p_1^2 p_2 - p_1 p_2^2 + p_1^2 p_3 + p_2^2 p_3 + p_1 p_3^2 + p_2 p_3^2\,,\\
					\alpha_{3,1}({\vb{p}})&=2 p_3 (1-p_2) (1-p_1)\,,\\
					\alpha_{3,0}({\vb{p}})&=(1-p_1) (1-p_2) (1-p_3)\,.
				\end{align}
			\end{subequations}
			
			{Thus}, $\alpha_{3,0}({\vb{p}})\ge 0$ and $\alpha_{3,1}({\vb{p}})\ge 0$ for all $\vb{p}$, while $\alpha_{3,2}({\vb{p}})$ can turn negative. By the Descartes rule of signs ({{recall that the rule states that for} 
				a polynomial with real coefficients,  the number of its positive roots $P$ is given by $P=S-2 m$, where $S$ is the number of sign changes between its nonzero coefficients  ordered in descending order of the powers, and $m\in \mathbb{N}$. In particular, if the number of sign changes is $S=1$ or $S=0$, there are, respectively, one or zero roots)}, $\gamma_3^{\vb{p}}(t)$ will have only one zero when $\alpha_{3,2}({\vb{p}})$ turns negative. Points $\vb{p}$ for which $\alpha_{3,2}({\vb{p}})$ turn negative are shown in Figure~\ref{figAPP:signs}a.
		\end{proof}
		
				%
		
		\begin{figure}[t]
			\centering
			\captionsetup[subfloat]{justification=centering}
			\subfloat[\label{fig:negsign}]{\includegraphics[height=5.5cm]{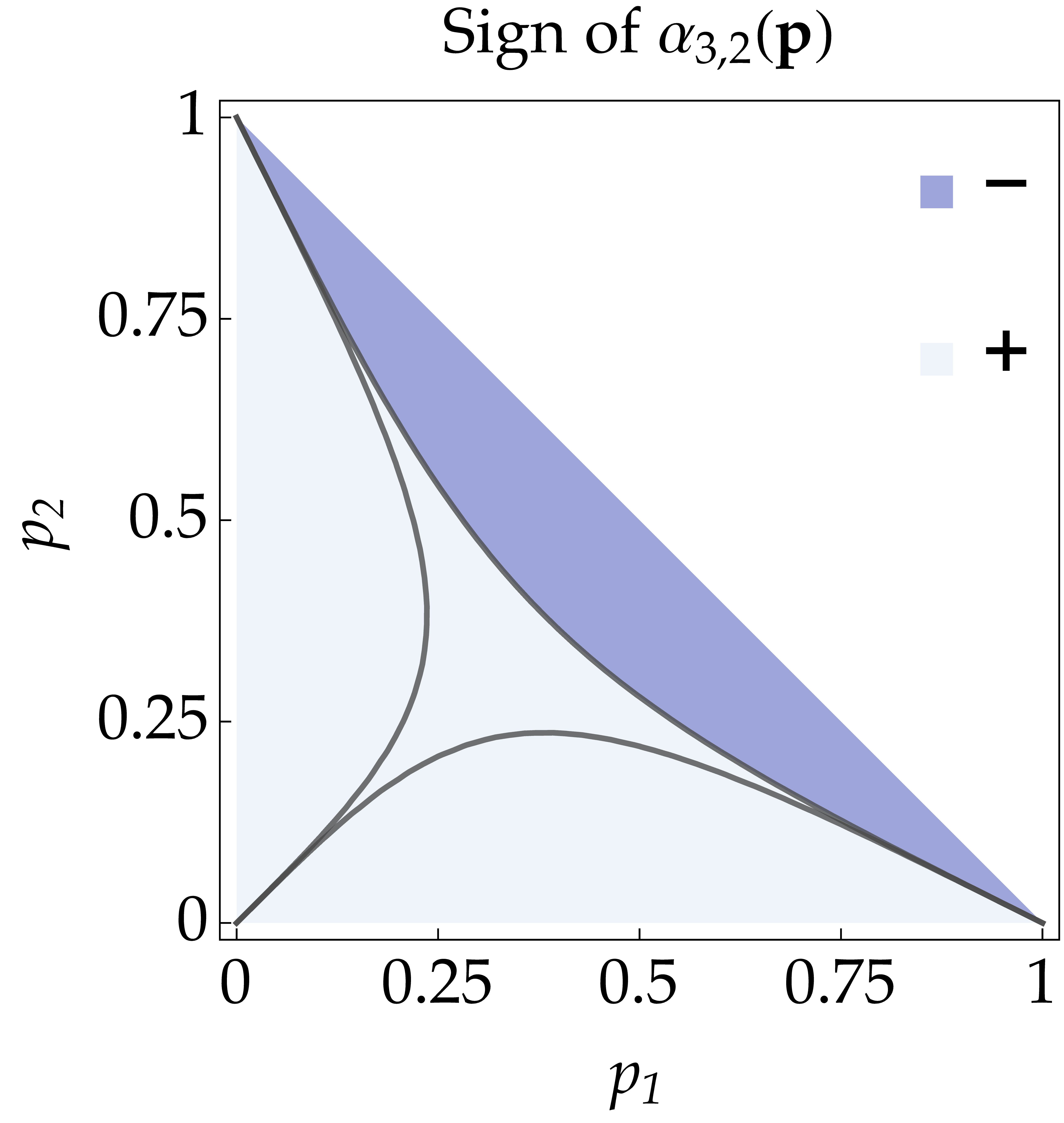}} \hspace{22pt}
			\subfloat[\label{fig:signpatterns33}]{\includegraphics[height=5.5cm]{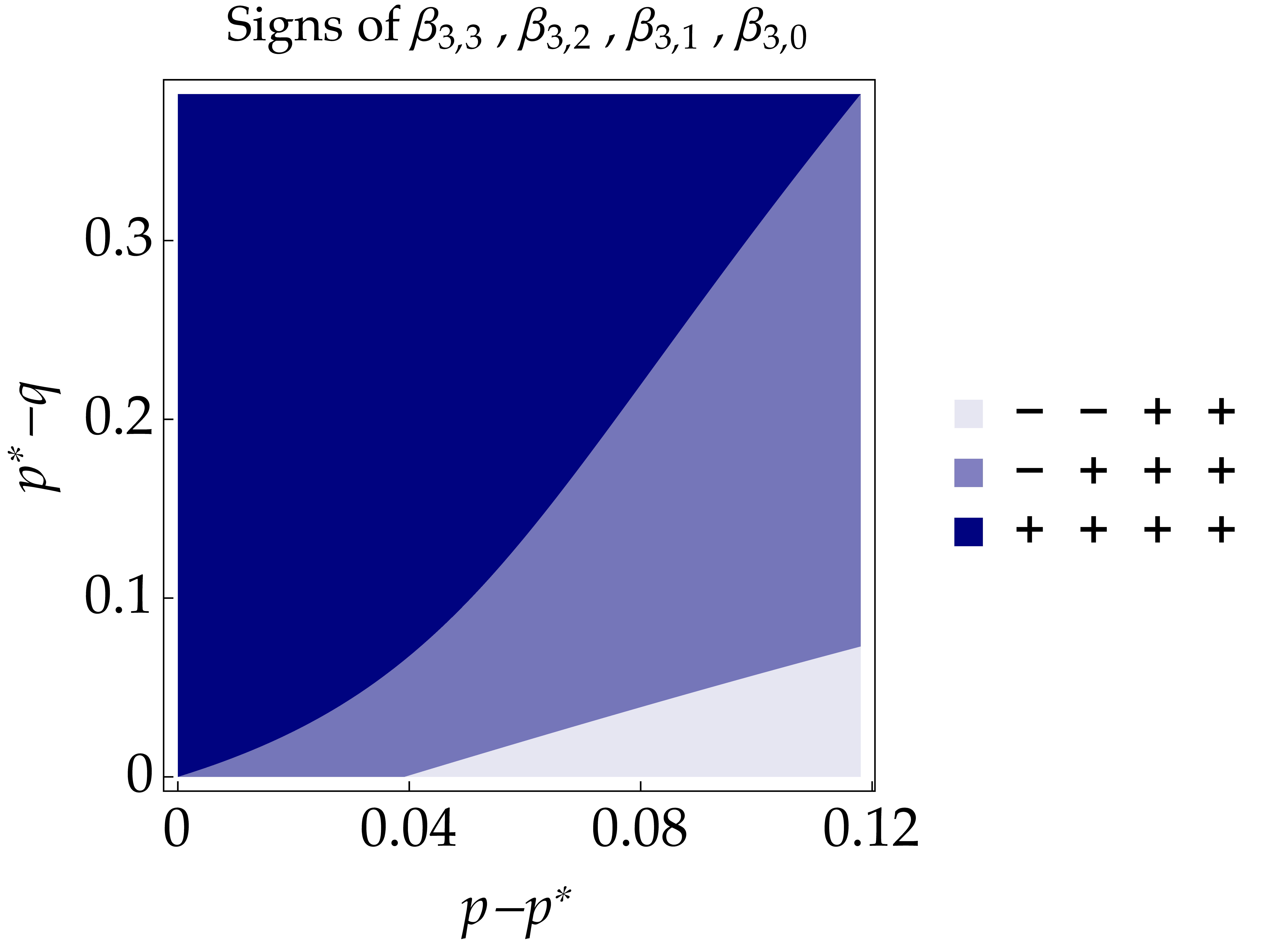}}%
			
			\caption{In (\textbf{a}), the subset of $\vb{p}$ that makes $\alpha_{3,2}(\vb{p})$  negative corresponds to the subregion of $\mathcal{P}$ for which $\gamma_3^{\vb{p}}(t)\not\ge0$. In (\textbf{b}), the sign patterns of $\beta_{3,3}$, $\beta_{3,3}$, $\beta_{3,1}$, $\beta_{3,0}$ are functions of $(p,q)$, $p\in(p^*,\frac{1}{2}]$ and $q\in[0,p^*]$, $p^*=\frac{1}{2}(3-\sqrt{5})$. Since there is only one sign flip, $\gamma_3^{\vb{p}}(t)+\gamma_3^{\vb{q}}(t)$ can then have up to one zero.}
			\label{figAPP:signs}
		\end{figure}

		\item \textit{Let $\vb{p}=(p,p,1-2p)$ and $\vb{q}=(q,q,1-2q)$, with $p^*< p \le 1/2$ and $ 0 \le q \le p^*$, $p^*=\frac{1}{2}(3-\sqrt{5})$. If  $\gamma_3^{\vb{p}}(t)+\gamma_k^{\vb{q}}(t)<0$ for some $t>t^*$, $k\in\{1,2,3\}$, then $\gamma_3^{\vb{p}}(t)+\gamma_k^{\vb{q}}(t)<0$ for all $t>t^*$}.
		\begin{proof}
			The reasoning is the same as in (\textit{I}). Let
			\begin{equation}
				\gamma_3^{\vb{p}}(t) + \gamma_k^{\vb{q}}(t)=\frac{N_{k}^{{p,q}}(t)}{D_k^{p,q}(t)}\,,
			\end{equation}
			where $D_k^{p,q}(t) \ge 0$  and
			\begin{align}
				N_{k}^{{p,q}}(t)=\sum_{n=0}^{n_k} \beta_{k,n}({{p,q}}) \,e^{2nt}\,.
			\end{align}
			
			Focusing only on the numerator, for $k=3$, one has $n_3=3$, and coefficients 	$\beta_{3,n}(p,q)$ read
			\begin{subequations}\label{coeff_beta3}
				\begin{align}
					\beta_{3,3}(p,q)&=2  \,(q (1 - 2 q) + p (1 - 6 q + 7 q^2) -
					p^2 (2 - 7 q + 4 q^2)) \,, \\
					\beta_{3,2}(p,q)&=2 \,(2 - 6 q + 5 q^2 - 2 p (3 - 10 q + 9 q^2) +
					p^2 (5 - 18 q + 12 q^2)) \,,\\
					\beta_{3,1}(p,q)&=
					6 \, (1 - q) (1 - p) (q + p (1 - 4 q)) \,,\\
					\beta_{3,0}(p,q)&=8 \,q\, (1 - q) \, p \,(1 - p)  \,.
				\end{align}
			\end{subequations}
			
			Coefficients $\beta_{3,0}(p,q)$ and $\beta_{3,1}(p,q)$ are always positive, while $\beta_{3,2}(p,q)$ and  $\beta_{3,3}(p,q)$ may become negative. Nevertheless, there can be at most one change of sign in the pattern of coefficients (considered in increasing order with the powers of $e^{2t}$), as shown in Figure~\ref{figAPP:signs}b. Again, by the Descartes rule of signs, one concludes that $\gamma_3^{\vb{p}}(t) + \gamma_3^{\vb{q}}(t)$ can have at most one real zero. The same reasoning applies for $\gamma_3^{\vb{p}}(t) + \gamma_1^{\vb{q}}(t)=\gamma_3^{\vb{p}}(t) + \gamma_2^{\vb{q}}(t)$, for which $n_1=n_2=2$ and
			\begin{subequations}\label{coeff_beta12}
				\begin{align}
					\beta_{i,2}(p,q)&=2 ( 1-2 q- p (3 - 7 q)+p^2 (1 - 4 q)) \,, \\
					\beta_{i,1}(p,q)&=2 (1 - p) ( 3 q+p (1 - 8 q)) \,,\\
					\beta_{i,0}(p,q)&=8 \,q \, p\, (1 - p) \,,
				\end{align}
			\end{subequations}
			where $i=1,2$,
			so that $\gamma_3^{\vb{p}}(t) + \gamma_2^{\vb{q}}(t)$ can have at most one zero for $t\ge 0$, depending on the sign of $\beta_{i,2}(p,q)$.
		\end{proof}
	\end{enumerate}

\end{document}